\renewcommand{\paragraph}[1]{\smallskip\noindent \textbf{#1} \;}
\DeclareMathOperator{\Forall}{\forall\,}
\DeclareMathOperator{\Exists}{\exists\,}
\DeclareMathOperator{\Nexists}{\nexists\,}
\newcommand{\ttrue}{\ensuremath{\mathtt{true}}\xspace}
\newtheorem{assumption}{Assumption}
\newcommand{\Objects}{\ensuremath{\mathit{Objects}}\xspace}
\newcommand{\Processes}{\ensuremath{\mathit{Processes}}\xspace}
\newcommand{\Operations}{\ensuremath{\mathit{Operations}}\xspace}
\newcommand{\OpExes}{\ensuremath{\mathit{OpExes}}\xspace}
\newcommand{\Events}{\ensuremath{\mathit{Events}}\xspace}
\newcommand{\Histories}{\ensuremath{\mathit{Histories}}\xspace}
\newcommand{\EventOrder}{\ensuremath{\mathit{EventOrder}}\xspace}
\newcommand{\obj}{\ensuremath{\mathit{obj}}\xspace}
\newcommand{\op}{\ensuremath{\mathit{op}}\xspace}
\newcommand{\proc}{\ensuremath{\mathit{proc}}\xspace}
\newcommand{\opex}{\ensuremath{\mathit{opex}}\xspace}
\newcommand{\inv}{\ensuremath{\mathit{inv}}\xspace}
\newcommand{\res}{\ensuremath{\mathit{res}}\xspace}
\newcommand{\ei}{\ensuremath{\mathit{e1}}\xspace}
\newcommand{\eii}{\ensuremath{\mathit{e2}}\xspace}
\newcommand{\idx}{\ensuremath{\mathit{idx}}\xspace}
\newcommand{\type}{\ensuremath{\mathit{type}}\xspace}
\newcommand{\name}{\ensuremath{\mathit{name}}\xspace}
\newcommand{\val}{\ensuremath{\mathit{value}}\xspace}
\newcommand{\correct}{\ensuremath{\mathtt{correct}}\xspace}
\newcommand{\faulty}{\ensuremath{\mathtt{faulty}}\xspace}
\newcommand{\omitting}{\ensuremath{\mathtt{omitting}}\xspace}
\newcommand{\byzantine}{\ensuremath{\mathtt{byzantine}}\xspace}
\newcommand{\byzhist}{\ensuremath{\mathit{ByzHistories}}\xspace}
\newcommand{\normal}{\ensuremath{\mathtt{normal}}\xspace}
\newcommand{\notif}{\ensuremath{\mathtt{notif}}\xspace}
\newcommand{\oo}{\ensuremath{{\rightarrow}}\xspace} % for op order
\newcommand{\eo}{\ensuremath{{<}}\xspace} % for event order
\newcommand{\corr}{\ensuremath{\mathit{P_C}}\xspace}
\newcommand{\cond}{\ensuremath{\mathcal{C}}\xspace}
\newcommand{\OpTermination}{\ensuremath{\mathit{OpTermination}}\xspace}
\newcommand{\PartialOrder}{\ensuremath{\mathit{PartialOrder}}\xspace}
\newcommand{\TotalOrder}{\ensuremath{\mathit{TotalOrder}}\xspace}
\newcommand{\OpHistOrder}{\ensuremath{\mathit{HistoryOrder}}\xspace}
\newcommand{\OpProcOrder}{\ensuremath{\mathit{ProcessOrder}}\xspace}
\newcommand{\OpSetOrder}{\ensuremath{\mathit{SetOrder}}\xspace}
\newcommand{\OpIntOrder}{\ensuremath{\mathit{IntOrder}}\xspace}
\newcommand{\OpFIFOOrder}{\ensuremath{\mathit{FIFOOrder}}\xspace}
\newcommand{\Legality}{\ensuremath{\mathit{Legality}}\xspace}
\newcommand{\validp}{\ensuremath{\mathcal{V}}\xspace}
\newcommand{\safep}{\ensuremath{\mathcal{S}}\xspace}
\newcommand{\livep}{\ensuremath{\mathcal{L}}\xspace}
\newcommand{\ctx}{\ensuremath{\mathit{ctx}}\xspace}
\newcommand{\Safety}{\ensuremath{\mathit{Safety}}\xspace}
\newcommand{\Liveness}{\ensuremath{\mathit{Liveness}}\xspace}
\newcommand{\Validity}{\ensuremath{\mathit{Validity}}\xspace}
\newcommand{\ProcessConsistency}{\ensuremath{\mathit{ProcessConsistency}}\xspace}
\newcommand{\FIFOConsistency}{\ensuremath{\mathit{FIFOConsistency}}\xspace}
\newcommand{\CausalConsistency}{\ensuremath{\mathit{CausalConsistency}}\xspace}
\newcommand{\SequentialConsistency}{\ensuremath{\mathit{SeqConsistency}}\xspace}
\newcommand{\Serializability}{\ensuremath{\mathit{Serializability}}\xspace}
\newcommand{\Linearizability}{\ensuremath{\mathit{Linearizability}}\xspace}
\newcommand{\SetLinearizability}{\ensuremath{\mathit{SetLinearizability}}\xspace}
\newcommand{\IntLinearizability}{\ensuremath{\mathit{IntLinearizability}}\xspace}
\newcommand{\RLinearizability}{\ensuremath{\mathit{RLinearizability}}\xspace}
\newcommand{\RSetLinearizability}{\ensuremath{\mathit{RSetLinearizability}}\xspace}
\newcommand{\RIntLinearizability}{\ensuremath{\mathit{RIntLinearizability}}\xspace}
\newcommand{\Consensus}{\ensuremath{\mathit{C}}\xspace}
\newcommand{\complete}{\ensuremath{\mathit{Complete}}\xspace}
\newcommand{\valence}{\ensuremath{\mathit{Val}}\xspace}
\newcommand{\Continuity}{\ensuremath{\mathit{Continuity}}\xspace}
\newcommand{\Asynchrony}{\ensuremath{\mathit{Asynchrony}}\xspace}
\newcommand{\Resilience}{\ensuremath{\mathit{Resilience}}\xspace}
\newcommand{\NonTriviality}{\ensuremath{\mathit{NonTriviality}}\xspace}
\newcommand{\Termination}{\ensuremath{\mathit{Termination}}\xspace}
\newcommand{\Branching}{\ensuremath{\mathit{Branching}}\xspace}
\newcommand{\PositiveValence}{\ensuremath{\mathit{NonEmptyValence}}\xspace}
\newcommand{\Correctness}{\ensuremath{\mathit{Correctness}}\xspace}
\newcommand{\opp}{\ensuremath{\mathsf{op}}\xspace}
\newcommand{\rread}{\ensuremath{\mathsf{read}}\xspace}
\newcommand{\wwrite}{\ensuremath{\mathsf{write}}\xspace}
\newcommand{\broadcast}{\ensuremath{\mathsf{broadcast}}\xspace}
\newcommand{\rbroadcast}{\ensuremath{\mathsf{r\_broadcast}}\xspace}
\newcommand{\rdeliver}{\ensuremath{\mathsf{r\_deliver}}\xspace}
\newcommand{\id}{\ensuremath{\mathit{id}}\xspace}
\newcommand{\get}{\ensuremath{\mathsf{get}}\xspace}
\newcommand{\set}{\ensuremath{\mathsf{set}}\xspace}
\newcommand{\propose}{\ensuremath{\mathsf{propose}}\xspace}
\newcommand{\decide}{\ensuremath{\mathsf{decide}}\xspace}
\newcommand{\send}{\ensuremath{\mathsf{send}}\xspace}
\newcommand{\receive}{\ensuremath{\mathsf{receive}}\xspace}
\newcommand{\Endround}{\ensuremath{\mathsf{end\_round}}\xspace}
\newcommand{\ie}{\textit{i.e.},\xspace}
\newcommand{\eg}{\textit{e.g.},\xspace}
\newcommand{\etc}{\textit{etc.}\xspace}
\newcommand{\state}{\ensuremath{\sigma}\xspace}
\newcommand{\States}{\ensuremath{\Sigma}\xspace}
\newif\ifannote
    \newcommand{\anninsert}[2]{{\color{#1}#2}}
    \newcommand{\anncomment}[3]{{\color{#1}[#2: #3]}}
    \newcommand{\anninsert}[2]{#2}
    \newcommand{\anncomment}[3]{}
\newcommand{\TA}[1]{\anncomment{red}{TA}{#1}}
\newcommand{\JW}[1]{\anncomment{brown}{JW}{#1}}
\newcommand{\ta}[1]{\anninsert{red}{#1}}
\newcommand{\af}[1]{\anninsert{violet}{#1}}
\newcommand{\cg}[1]{\anninsert{orange}{#1}}
\newcommand{\nn}[1]{\anninsert{blue}{#1}}
\newcommand{\jw}[1]{\anninsert{cyan}{#1}}
\newcommand{\xOmit}[1]{}
\newcommand{\del}[1]{}
\title{AMECOS: A Modular Event-based Framework
for Concurrent Object Specification}
\titlerunning{AMECOS: A Modular Event-based Framework for Concurrent Object Specification}
\author{Timoth\'e Albouy}{Univ Rennes, Inria, CNRS, IRISA, Rennes, France}{timothe.albouy@irisa.fr}{https://orcid.org/0000-0001-9419-6646}{}
\author{Antonio Fern\'andez Anta}{IMDEA Networks Institute, Madrid, Spain}{antonio.fernandez@imdea.org}{https://orcid.org/0000-0001-6501-2377}{}
\author{Chryssis Georgiou}{University of Cyprus, Nicosia, Cyprus}{chryssis@ucy.ac.cy}{https://orcid.org/0000-0003-4360-0260}{}
\author{Mathieu Gestin}{Univ Rennes, Inria, CNRS, IRISA, Rennes, France}{mathieu.gestin@inria.fr}{https://orcid.org/0009-0004-4045-6560}{}
\author{Nicolas Nicolaou}{Algolysis Ltd, Nicosia, Cyprus}{nicolas@algolysis.com}{https://orcid.org/0000-0001-7540-784X}{}
\author{Junlang Wang}{IMDEA Networks Institute and Universidad Carlos III de Madrid, Madrid, Spain}{junlang.wang@imdea.org}{https://orcid.org/0009-0003-6004-8823}{}
\authorrunning{Albouy, Fern\'andez Anta, Georgiou, Gestin, Nicolaou, and Wang}
\date{}
\keywords{
Concurrency, Object specification, Consistency conditions, Consensus impossibility.
\TA{WARNING: Version with annotations.}
}
\begin{document}

\maketitle

\begin{abstract}
In this work, we introduce a modular framework for specifying distributed systems that we call AMECOS.
Specifically, our framework departs from the traditional use of sequential specification, which presents limitations both on the specification expressiveness and implementation efficiency of inherently concurrent objects, as documented by Casta\~neda, Rajsbaum and Raynal in CACM 2023.
Our framework focuses on the interactions between the various system components, specified as concurrent objects.
Interactions are described with sequences of object events.
This provides a modular way of specifying distributed systems and separates legality (object semantics) 
from other issues, such as consistency.
We demonstrate the usability of our framework by (i) specifying various well-known concurrent objects, such as registers, shared memory, message-passing, reliable broadcast, and consensus, (ii) providing hierarchies of ordering semantics (namely, consistency hierarchy, memory hierarchy, and reliable broadcast hierarchy), and (iii) presenting a novel axiomatic proof of the impossibility of the well-known Consensus problem.
\end{abstract}

% \newpage
% \setcounter{page}{1}

%\renewcommand{\baselinestretch}{0.7}

\section{Introduction}
\label{sec:Intro}

%%%
%%% Motivate the need for specifications in concurrent systems
%%%
{\bf Motivation.} Specifying distributed systems is challenging as they are inherently complex: they are composed of multiple components that \textit{concurrently} interact with each other in unpredictable ways, especially in the face of asynchrony and failures.
Stemming from this complexity, it is very challenging to compose concise specifications of distributed systems and, even further, devise correctness properties for the objects those systems may yield.
%
%\NN{I think we should bring the motivation about sequential specification that does not fully capture the concurrent nature of the systems here.}\CG{Hm, we need to discuss how to do this. I believe it will make it hard to talk about formal methods, unless if we push the comparison with formal methods in the related work all together. In any case, we could do what you suggest, but it will require to change the  flow of discussion that we currently have.}
%One of the main challenges of specifying concurrent systems is ensuring that they are correct, that is, that the system should behave as expected, even in the face of unexpected interactions between its components.
%To ensure the correctness of a system, \ie \textit{safety} (``nothing bad happens'') and \textit{liveness} (``something good eventually happens''), the specification must be able to capture all of the possible ways in which the system can interact with itself \NN{do we mean the internal events here?} and its environment. This can be difficult, especially when dealing with complex systems with many components where capturing component- and system-wide safety (``nothing bad happens'') and liveness (``something good eventually happens'') can be quite challenging. \NN{i am a bit confused with the previous sentence. We have not talked about system and component safety and/or liveness thus far. Do we imply that those are part of the specification? Do we specify the system or the object. We may want to add a small clarification here.}
To ensure the correctness of a distributed system, realized by both \emph{safety} (``nothing bad happens'') and \textit{liveness} (``something good eventually happens'') properties, the specification must capture all of the possible ways in which the system's components can interact with each other and with the system's external environment.
This can be difficult, especially when dealing with complex and loosely coupled distributed systems in which components may proceed independently of the actions of others.
Another challenge caused by concurrency is to specify the {\em consistency} of the system or the objects it implements.
The order in which processes access an object greatly impacts the evolution of the state of said object.
Several types of consistency guarantees exist, from weak ones such as PRAM consistency~\cite{LS88} to stronger ones such as Linearizability~\cite{HW90}.
Therefore, one needs to precisely specify the ordering/consistency guarantees expected by each object the system implements.

To address the inherent complexity of distributed systems, researchers often map executions of concurrent objects to their sequential counterparts, using \emph{sequential specifications}~\cite{AW04,L96}.
Although easier and more intuitive for specifying how abstract data structures behave in a sequential way~\cite{S11}, as noted in~\cite{CRR23}, sequential specifications appear unnatural for systems where events may not be totally ordered, hindering the potential of concurrent objects.
More precisely, there are concurrent objects that do not have a sequential specification (\eg set-linearizable objects or Java's exchanger object~\cite{CRR23}), or objects that, even if they can be sequentially specified, have an inefficient sequential implementation.
For example, it is impossible to build a concurrent queue implementation that eliminates the use of expensive synchronization primitives~\cite{CRR23}.

\smallskip
\noindent {\bf Our approach.} In this work, we propose a modular framework which we call AMECOS (from \textit{A Modular Event-based framework for Concurrent Object Specification}) that {\em does not} use sequential specification of objects, but instead offers a relaxed concurrent object specification mechanism
% (imposing partial order)
that encapsulates concurrency intuitively, alleviating the specifier from complex specifications.
Here are some noteworthy {\em features} of our framework.

%\begin{itemize}%[leftmargin=5mm]
%\item 
\noindent
\emph{Component Identification and Interfacing:} Our specification focuses on the {\em interface} between the various system components specified as concurrent objects.
In particular, it considers objects as opaque boxes, specifying them by directly describing the intended behavior and only examining the interactions between the object and its clients. 
In this way, we do not conflate the specification of an object with its implementation, as is typically the case with formal specification languages such as TLA+~\cite{L94} and Input/Output Automata~\cite{LT87}. 
Specifically, these languages specify a distributed system and its components with a state machine (as \jw{a transparent} box). In contrast, our formalism specifies an object at the interface level
%, while they describe the whole state machine. In other words, they see the system as a white/transparent box while we see it as a black/
(as an opaque box).
Furthermore, we avoid using higher-order logic, which sometimes can be cumbersome, and instead, we use simple logic, rendering our specification ``language'' simple to learn and use.
In some sense, we provide the ``ingredients'' needed for an object to satisfy specific properties and consistency guarantees.

%\item 
\noindent
\emph{Modularity:} Focusing on the object's interface also provides a {\em modular way} of specifying distributed systems and separates the object's semantics from other aspects, such as consistency.
With our formalism, we can, for example, specify the semantics of a shared register~\cite{L78}, then specify different consistency semantics, such as PRAM~\cite{LS88} and Linearizability~\cite{HW90}, {\em independently}, and finally combine them to obtain PRAM and atomic (\ie linearizable) shared registers, respectively.
This modularity also helps, when convenient (\eg for impossibility proofs), to abstract away the underlying communication medium used for exchanging information.
In fact, we also specify communication media as objects.

%\item 
\noindent
\emph{Structured Formalism:} The formalism follows a {\em precondition/postcondition} style for specifying an object's semantics, via 3 families of predicates: \textit{Validity}, \textit{Safety}, and \textit{Liveness}.
The first one specifies the requirements for the use of the object (preconditions), while the other two specify the guarantees (hard and eventual) provided by the object (postconditions).
This makes our formalism easy to use, providing a structured way of specifying object semantics.

%\item 
\noindent
\emph{Notification Operations:} Another feature of the formalism is what we call {\em notification} operations, that is, operations that are not invoked by any particular process, but that spontaneously notify processes of some information.
For example, a broadcast service provides a \broadcast operation for disseminating a message in a system, but all processes of this system must be eventually notified that they received a message without invoking any operation.
So, a notification is a ``callback'' made by an object to a process, and not by a process to an object.
This feature increases our framework's expressiveness compared to formalisms that are restricted to using only invocation and response events for operations~\cite{R18}.
% \del{CG: I moved it in the contributions to preserve the flow of discussion.
% \CG{Perhaps we should add an item explaining how the formalism can also help in proofs? E.g., its modulatiry helps to abstract away unnecessary technical details? Or it can help us do axiomatics proofs?}
% \AF{I agree. something like: The formalism proposed relieves from providing implementation details of the object or system being specified,}
% \CG{This is already stated in the first bullet.}
% \AF{which allows to prove their intrinsic fundamental properties. This is shown in this paper by proving the impossibility of consensus without detailing the communication means (and many other details) of the system.}
% \CG{This is mentioned in the second bullet.}}
%\end{itemize}

\smallskip
\noindent {\bf Contributions.} 
The following list summarizes the contributions of the paper.%\vspace{-.3em}
%Besides the specification framework with the above-mentioned features, presented in \Cref{sec:framework,sec:objspec}, we also provide the following contributions.

\begin{itemize}%[leftmargin=5mm]
    \item We first present our framework's architecture, basic components, key concepts, and notation (see \Cref{sec:framework}).
    We especially show that our framework can elegantly take into account a wide range of process failures, such as crashes or Byzantine faults (see \jw{\Cref{sec:faults,sec:histories}}).
    Then, we demonstrate how concurrent objects can be specified using histories, preconditions, and postconditions (see \Cref{sec:objspec}).
    
    \item Using our formalism, we show how we can specify several classic consistency conditions, from weak ones such as PRAM
    %(that we call FIFO)
    consistency to strong ones such as Linearizability. Then, we show that we can define consistency conditions (set-linearizability and interval linearizability) for objects that do not have sequential specifications (see \Cref{sec:consistency}).

\item {Using the definitions of object specification and consistency, we show how they can be combined to yield correctness definitions for histories, even in the presence of Byzantine failures (see \Cref{sec:histories})}.  
\item To exemplify the usability of our formalism, we specify shared memory, message passing, and reliable broadcast as concurrent objects (see Section~\ref{sec:obj-spec-examples}).
%see Appendices \jw{\ref{sec:examples}-\ref{sec:shared-mem}}).
    The modularity of our formalism is demonstrated by combining the consistency conditions with these object specifications, obtaining shared memory and broadcast hierarchies.
\item 
% \af{We further demonstrate the usability (and simplicity) of our framework by specifying the well-known $k$-Set Agreement problem \cite{chaudhuri_more_1993} as a concurrent object (see Section~\ref{sec:codef}). 
% %and \ref{sec:kSA}).
% Then, we provide a novel axiomatic proof of the impossibility of solving wait-free $k$-Set Agreement in an asynchronous system with $n>k$ processes\footnote{
% Similarly, in \Cref{sec:consensus}, we define a
% {\em Consensus} object \cite{FLP85}, and give a novel axiomatic proof showing that it cannot be implemented in an asynchronous system with crashes. We also discuss the benefits of this proof over previous proofs.
% }.
% To our knowledge, this is the first impossibility proof for wait-free set agreement that is agnostic to the communication mechanism (\eg message passing, shared memory) and is only based on three simple axioms: \Asynchrony, \minusOneResilience, and \NonTriviality (see \Cref{sec:asynconcsystem}).
% The simplicity and generality of this proof  (see \Cref{sec:impSA}) stem from the fact that our formalism and the axioms relieve us from providing implementation details of the object or system being specified, which allows us to prove intrinsic fundamental properties.
% }
We demonstrate our framework's effectiveness in constructing axiomatic proofs by presenting a novel, axiomatic proof of the impossibility of resilient consensus objects in an asynchronous system 
%Then, we provide proof of the impossibility of having resilient consensus objects %and set agreement 
%in an asynchronous system 
    (see \Cref{sec:consensus}).
    To our knowledge, this is one of the simplest and most general proofs of this result. Its simplicity and generality stem from the fact that our formalism abstracts away the implementation details of the object or system being specified, allowing us to focus on proving intrinsic fundamental properties.
    For instance, our impossibility proof abstracts away the communication medium.
    (For completeness, we show in \Cref{sec:asych-axiom-register-channel} that SWSR atomic registers and point-to-point message-passing channels satisfy the relevant assumptions of the proof.)
    % \AF{Axiomatic proofs}
    % \JW{We further demonstrate the usability (and simplicity) of our framework by specifying, as a concurrent object, the celebrated {\em k-Set Agreement} problem~\cite{FLP85} (see \ref{sec:codef}).
    % }

         %\cg{, including the notion of linearizable broadcast}. %CG: removed due to DISC 2021 paper.
    %An interested reader can find two more examples of specifications in \Cref{sec:examples} and \Cref{sec:shared-mem}, where the asynchronous message-passing and shared memory communication media are specified as concurrent objects.

\end{itemize}
%
%In the sequel, we present our framework and key concepts and notation in \Cref{sec:framework}. We show how, using our framework, we can specify concurrent objects in \Cref{sec:objspec} and define consistency semantics in \Cref{sec:consistency}. Both sections are supplemented with examples of well-known objects and consistency semantics. To demonstrate how our formalism can be used as a proving tool we define a consensus object and show its impossibility under failure conditions in an asynchronous system in \Cref{sec:consensus}.
We also present a comparison with related work (\Cref{sec:related-work}) and a discussion of our findings (\Cref{sec:conclusion}).
%Due to page limitations, some developments appear in appendices.

\section{Related Work} \label{sec:related-work}
The present work addresses two different (but not unrelated) problems: object {semantics} specification and consistency conditions.
It also deals with the Consensus impossibility.\vspace{.4em}
%These issues are discussed below.  

\noindent \textbf{Object {semantics} specification.}
As we already discussed, %in \Cref{sec:Intro}
traditionally, formal definitions of concurrent objects (\eg shared stacks or FIFO channels) are given by {\em sequential specifications}, which define the behavior of some object when its operations are called sequentially. 
{Distributed algorithms are commonly defined using formal specification languages, \eg input/output (IO) automata~\cite{LT87}, temporal logics (\eg LTL~\cite{P77}, CTL*~\cite{EH83}, TLA~\cite{L94}) and CSP~\cite{AJS05},
for implementing concurrent objects. Formal proofs are used to show that those implementations satisfy the sequential specifications of the object in any possible execution.}

% Sequential specifications are commonly given in the form of input/output (IO) automata~\cite{LT87} (a generalization of finite-state machines that can have an infinite number of states).
% However, as we already discussed, this usual approach to specifying objects has several limitations. We refer the reader to \cite{CRR23} for a nice overview of these limitations.  
% Other important lines of work regarding system specification are temporal logics (\eg LTL~\cite{P77}, CTL*~\cite{EH83}, TLA~\cite{L94}) and CSP~\cite{AJS05}.

We argue that defining concurrent objects using such formal methods conflates the specification and implementation of said objects.
% \nn{\st{Take for example IO Automata: an automaton describes the logic implementing a given behavior, but the same behavior could also be implemented using another logic/automaton.}}
On the contrary, as we already discussed in \Cref{sec:Intro}, our formalism considers objects as opaque boxes, and the specification stays at the object's interface.
Furthermore, formal methods are typically complex and difficult to learn, requiring specialized tools and expertise.
Our formalism, instead, relies only on simple logic (no higher-order logic is required), making it easy to learn and use.
{To this end, we concur that our formalism {\em complements} existing formal methods by providing an intuitive 
way to express the necessary properties a concurrent object must satisfy. Moreover, the formalism may reveal the necessary 
components (``ingredients'') needed for an object to satisfy specific guarantees. Armed with our object specifications, formal methods 
may be used to specify and compose simpler components, drifting away from the inherent complexities of more synthetic 
distributed structures.}
% Finally, depending on the syntax constraints or the machinery required, formal methods may not be able to express all of the properties of a concurrent object~\cite{K97,BMD13}.
% So, the underlying design principle of our specification framework is to offer simplicity without trading expressiveness.\vspace{.4em} 

\noindent \textbf{Consistency conditions.}
Consistency conditions can be seen as additional constraints on the ordering of operation calls that can be applied on top of an object {semantics} specification, which, in this work, we call {\em legality}.
Over time, several very influential consistency conditions have been presented (\eg \cite{ANBKH95,BHG87,HW90,L79,LS88}).
%, among which we can cite~\cite{L79,BHG87,LS88,HW90,ANBKH95}.
However, all of these consistency conditions have been introduced in their own notations and context (databases, RAM/cache coherence or distributed systems), which raised the need to have a unified formalism for expressing all types of consistency.
%In this endeavor, 
Several formalisms have been proposed~\cite{B14,BGYZ14,P16,SN04,VV16}.
\af{We propose a formalism that uses light notation and is very expressive.}
As we demonstrate in \Cref{sec:consistency}, we view legality as the lowest degree of consistency, thus making a clear separation between legality and consistency.
Furthermore, our formalism helps us specify consistency guarantees incrementally, moving from weaker to stronger ones, yielding a consistency hierarchy.
\ta{Several works already presented consistency hierarchies and a way to combine consistencies with object specifications~\cite{P16,VF03}.}
%\sout{However, contrary to our framework, he relied on sequential object specifications, thus hampering the expressiveness of his formalism, as discussed previously.}}
\cg{In contrast to our framework, these approaches rely on sequential object specifications, which, as we have already discussed, can constrain expressiveness.}
% This hierarchy has already appeared elsewhere in the literature~\cite{P16}.
% However, we argue that our specification of it is more elegant and intuitive.

Possibly the work in~\cite{VV16} (derived from \cite{B14,BGYZ14}) is the closest to ours with respect to ``specification style.''
However, the object specifications in~\cite{VV16} (and \cite{B14,BGYZ14}) use the artificial notion of \emph{arbitration} to \emph{always} impose a total order on object operation executions, whereas our formalism does not require a total order (unless the consistency imposes it); in general, we only consider partial orders of operation executions. 
Another notable difference with \cite{VV16} is that their consistency specification is for storage objects, whereas we specify consistency conditions in general, and then we combine them with a specification of a shared memory object to yield a consistency hierarchy for registers (similar to storage objects considered in~\cite{VV16}).
Additionally, compared to other endeavors such as \cite{B14,BGYZ14,P16}, our framework also considers strong failure types such as Byzantine faults.

\noindent \textbf{Impossibility of Asynchronous Consensus.}
Consensus is a fundamental abstraction of distributed computing with a simple premise: all participants of a distributed system must propose a value, and all participants must eventually agree on one of the values that have been proposed~\cite{L96}.
But just as fundamental is the impossibility theorem associated with consensus in the presence of asynchrony and faults.
This result of impossibility, colloquially known as the \emph{FLP} theorem (for the initials of its authors), was first shown in 1983~\cite{FLP85}.
Later on, different approaches for proving similar theorems were proposed (\eg \cite{GL23,HRT98,HS97}).
Notably, the impossibility of asynchronous resilient consensus can be proved using algebraic topology and, more specifically, the asynchronous computability theorem~\cite{HS99}.
\ta{Constructive proofs follow another interesting approach \cite{GL23,V04}: they explicitly describe how a non-terminating execution of consensus can be constructed.\footnote{
    In their paper~\cite{GL23}, Gafni and Losa show the equivalence of 4 different models in terms of computability power: asynchronous 1-resilient atomic shared memory, asynchronous 1-resilient message passing, synchronous fail-to-send message passing, and synchronous fail-to-receive message passing.
    They then present a constructive impossibility proof in the synchronous fail-to-send message-passing model, and the impossibility in the other models directly follows.
    However, unlike ours, their proof still makes assumption of the communication medium.
}}
{Like \cite{T91}, our proof follows an axiomatic approach: the notion of asynchronous resilient consensus is defined as a system of axioms, and then it is proved that this system is inconsistent, \ie that there is a contradiction.}

Compared to previous impossibility proofs of asynchronous resilient consensus, we believe our proof to be one of the most general, partly due to the more natural notations offered by our specification formalism.
In particular, unlike~\cite{FLP85}, which assumes that processes communicate through a message-passing network, our proof is agnostic on the communication medium (as long as such communication medium is asynchronous), hence it holds both for systems using send/receive \af{or RW shared memory.}
In addition, our proof is more general than many previous proofs, in the sense that it shows an impossibility for a very weak version of the problem.

For instance, our proof differs from~\cite{FLP85,HS99}, which assumes that, at least in some specific cases, the value decided by the consensus instance is a value proposed by some process.
In contrast, our proof does not make this assumption, as it does not need to relate the inputs (proposals) to the outputs (decisions) of the consensus execution.
\JW{REMOVED:This way, our proof also shows the impossibility of other abstractions, such as shared random coins, which allow the system processes to agree on some value chosen at random.}

\begin{figure}[t!]
    \centering
    \includegraphics[scale=0.45]{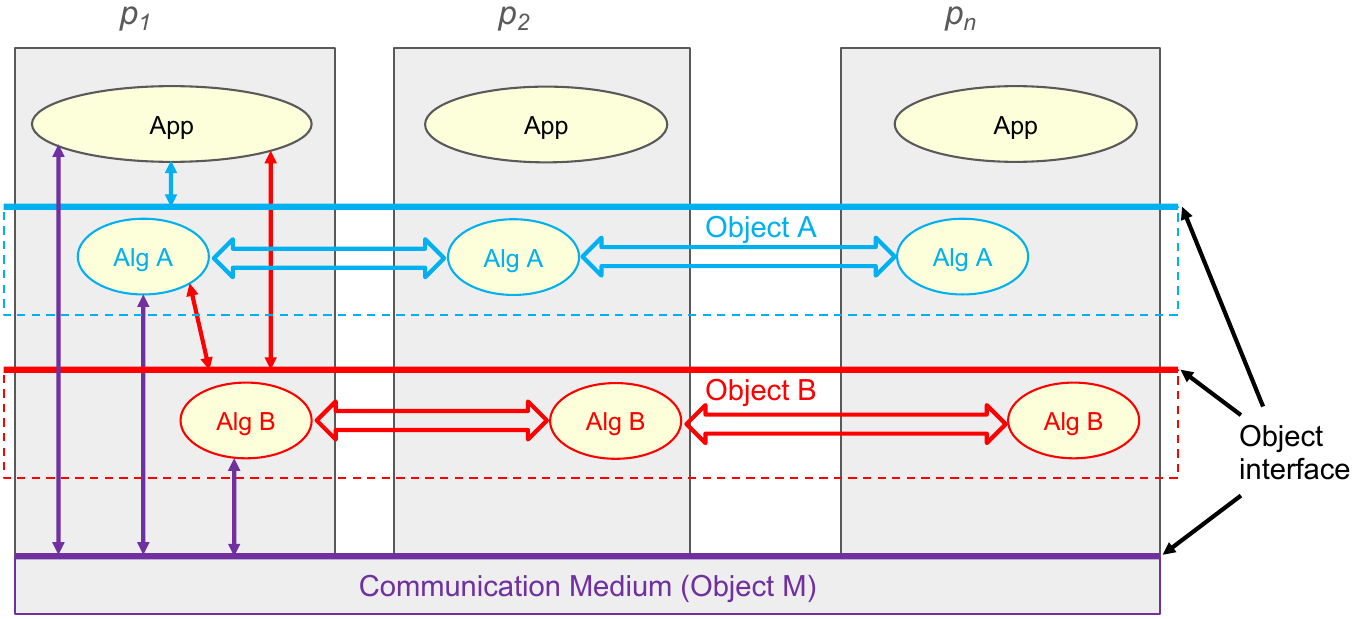}
    \caption{Example of a distributed system architecture with 3 objects. The distributed system is formed by $n$ processes that communicate using a Communication Medium, which is Object $M$. The system has 2 more objects, Objects $A$ and $B$, built by local \jw{modules} (Alg.~$A$ and $B$) in each process, that cooperate using a protocol (horizontal arrows) to implement the object. Each process has an application algorithm that can execute the operations (vertical arrows) of the objects offered by their interface. Also, objects' local \jw{modules} can execute operations of other objects. }
    \label{fig:sys-archi}
\end{figure}

\section{Framework: Architecture, Components, Notation, and Concepts} \label{sec:framework}
%This section presents the basic components of the framework with its key notation and concepts. 

\subsection{System Architecture}

The proposed framework assumes a distributed system as depicted in \Cref{fig:sys-archi}, which has a set of processes that interact with some Communication Medium (\eg shared memory, message passing), modeled as an object. The processes have applications and local \jw{modules} that implement other objects. \cg{(Processes are essentially computing entities, and they could be modeled as I/O Automata~\cite{LT87}.)} Each object offers an interface formed by operations that can be invoked by applications and \jw{modules} of other objects. An execution of this system is an execution of the applications of all the processes and the execution of the object operations these applications invoke (directly or by execution of operations in other objects).\vspace{-.5em}

\begin{figure}[t!]
    \centering
    \begin{tikzpicture}
    
\tikzmath{
\xObj = -6.7;
\xOpe = -4.3;
\xProc = -2.1;
\xOpex = 0;
\xEvt = 2.9;
\xOrd = 5.5;
\yProc = .7;
}

\newcommand{\titleSize}{\normalsize}

% opex-event invocation link
\draw (\xOpex, .15) -- (\xEvt,.15);
\node at (\xEvt-.9, .3) {\scriptsize 0..1};
\node at (\xEvt-1.35, .3) {\scriptsize \inv};
\node at (\xOpex+.95, .3) {\scriptsize 0..1};

% opex-event response link
\draw (\xOpex, -.15) -- (\xEvt,-.15);
\node at (\xEvt-.9, -.3) {\scriptsize 0..1};
\node at (\xEvt-1.35, -.32) {\scriptsize \res};
\node at (\xOpex+.95, -.3) {\scriptsize 0..1};

% order-event event1 link
\draw (\xOrd, .15) -- (\xEvt,.15);
\node at (\xEvt+.75, .3) {\scriptsize 1};
\node at (\xEvt+1, .3) {\scriptsize \ei};
\node at (\xOrd-1.15, .25) {\scriptsize *};

% order-event event2 link
\draw (\xOrd, -.15) -- (\xEvt,-.15);
\node at (\xEvt+.75, -.3) {\scriptsize 1};
\node at (\xEvt+1, -.3) {\scriptsize \eii};
\node at (\xOrd-1.15, -.35) {\scriptsize *};

% operation-object link
\draw (\xOpe, 0) -- (\xObj, 0);
\node at (\xObj+.8, .2) {\scriptsize 1};
\node at (\xObj+.9, -.2) {\scriptsize \obj};
\node at (\xOpe-1.08, .15) {\scriptsize *};

% opex-operation link
\draw (\xOpex, 0) -- (\xOpe, 0);
\node at (\xOpe+1.05, -.15) {\scriptsize 1};
\node at (\xOpe+1.12, .15) {\scriptsize \op};
\node at (\xOpex-.85, -.15) {\scriptsize *};

% sets
\node[draw,thick,fill=white,minimum height=.7cm] (object) at (\xObj, 0) {\titleSize\Objects};

\node[draw,thick,fill=white,align=left,minimum height=.7cm] (operation) at (\xOpe, .3) {\titleSize\Operations\\\small\name, \type,\\\small\validp, \safep, \livep};
\draw (\xOpe-.94, .5) -- (\xOpe+.94,.5);

\node[draw,thick,align=left,minimum height=.7cm] (process) at (\xProc,\yProc) {\titleSize\Processes\\\small\type};
\draw (\xProc-.82, \yProc) -- (\xProc+.82,\yProc);

\node[draw,thick,fill=white,minimum height=.7cm] (opex) at (\xOpex, 0) {\titleSize\OpExes};

\node[draw,thick,fill=white,align=left,minimum height=.7cm] (event) at (\xEvt, 0) {\titleSize\Events\\\small\val};
\draw (\xEvt-.63, 0) -- (\xEvt+.63,0);

\node[draw,thick,fill=white,minimum height=.7cm] (order) at (\xOrd, 0) {\titleSize\EventOrder};

% opex-process link
\draw (opex) -- (process);
\node at (\xProc+.95, \yProc-.5) {\scriptsize 1};
\node at (\xProc+1.13, \yProc-.1) {\scriptsize \proc};
\node at (\xOpex-.85, .15) {\scriptsize *};
    
\end{tikzpicture}
    \vspace{-2em}
    \caption{Class diagram for the relations and attributes of the components (sets) of the framework.}
    \label{fig:data-model}
\end{figure}

\subsection{Components} \label{sec:model-sets}
\af{In the framework, an execution of a distributed system is described with 6 (potentially infinite) sets.} As shown in \Cref{fig:data-model}, these sets are in relation to each other.

%\begin{itemize}%[leftmargin=5mm]
    %\item 
    \noindent
    \textbf{\Processes:} contains all \textit{processes} $p_i$ (of identity $i$) of the system \ta{execution}, where the $p_i.\type \in \{\correct, \faulty\}$ attribute is the \textit{type} of $p_i$ (either correct or faulty, see \Cref{sec:faults}). 
    
    %\item 
    \noindent
    \textbf{\Objects:} contains all the \textit{objects} of the system (\eg a channel, a register, a stack, \etc).
    An object is associated with a set of {\em operations} (the interface) that processes can \af{execute on} it.

    %\item 
    \noindent
    \textbf{\Operations:} contains all the \textit{operations} $\op$ that can be \af{executed} on some object $\op.\obj \in \Objects$, where the $\op.\name$ attribute is the name of \op (\eg \wwrite or \rread), the $\op.\type \in \{\normal, \notif\}$ attribute is the type of $\op$ (either it is a normal operation or a notification operation, see \Cref{sec:notations}), and the $\op.\validp$, $\op.\safep$, and $\op.\livep$ attributes are predicates respectively defining the invocation validity, safety, and liveness of the operation (see \Cref{sec:objspec}).

    \noindent
    \textbf{\OpExes:} contains all \textit{operation executions} (\textit{op-ex} for short) $o$ of an operation $o.\op \in \Operations$ by a process $o.\proc \in \Processes$. When an operation is \af{executed}, it produces an invocation event and a response event. Hence, an op-ex $o$ is the pair $(o.\inv, o.\res)$ of the invocation and the response
    events of $o$. \af{If $o.\op.\type = \notif$ then op-ex $o$ has no invocation event, \ie $o.\inv=\bot$ (see \Cref{sec:notations}).}

    %\item 
    \noindent
    \textbf{\Events:} contains all the \textit{events} $e$ in \OpExes, where the $e.\val$ attribute is the value of the event (\ie the input or output value of an op-ex, see \Cref{sec:notations}).
    The set \Events does not contain the $\bot$ value, which denotes the absence of an event.
    
    %\item 
      %\item 
    \noindent
    \textbf{\EventOrder:} corresponds to the \af{total order of the events in} \Events, represented as a set of event pairs $\mathit{ep}=(\mathit{ep}.\ei,\mathit{ep}.\eii)$.
    \af{For the sake of simplicity, we denote this total order with $\eo$ over all elements of \Events as follows:}
    $\eo \triangleq \{(\mathit{ep}.\ei,\mathit{ep}.\eii) \mid \Forall \mathit{ep} \in \EventOrder\}.$
    The order $\eo$ defines the temporal ordering of events, \ie $e \eo e'$ means event $e$ happens before event $e'$.
%\end{itemize}
%

\subsection{Notation} \label{sec:notations}

%\paragraph{\bf Op-exes.}
As described,
%For simplicity of notation, we can represent 
an op-ex $o \in \OpExes$ is a pair $(i,r)$ where $i=o.\inv$ and $r=o.\res$.
%The elements of this pair, $o.\inv$ and $o.\res$, \af{are either events of \Events or the value $\bot$.}
An op-ex $o \in \OpExes$ can have the following configurations: 
%\af{\sout{(the $(\bot,\bot)$ pair is forbidden, like all pairs of identical values}}: 
%, see \Cref{cons:opex-valid} below)}}:
\begin{itemize}%[leftmargin=5mm]
    \item $o=(i,r)$, where $i,r \in \Events$, \af{$i \neq r$, and $i < r$:} in this case, $o$ is said to be a {\em complete} op-ex (an operation invocation that has a matching response),

    \item $o=(i,\bot)$, where $i \in \Events$: in this case, $o$ is said to be a {\em pending} op-ex (this notation is useful to denote operation calls by faulty processes or operation calls that do not halt),
    %  or operation calls that have no response by definition

    \item $o=(\bot,r)$, where $r \in \Events$: in this case, $o$ is said to be a {\em notification} op-ex (\ie its operation is not a callable operation, but an operation spontaneously invoked by the object to transmit information to its client).
\end{itemize}
%
% As we will see later (\Cref{cons:opex-valid}), the $(\bot,\bot)$ pair is forbidden, like all pairs of identical values.

% In the following, we introduce a simple notation for referring to op-exes.
The ``$\equiv$'' relation indicates that some op-ex $o$ follows a given form or the same form as another op-ex: an op-ex $o$ could be of the form ``$\rread$ op-ex on register $R$ by process $p_i$ which returned \af{output} value $v$'', which we denote ``$o \equiv R.\rread_i()/v$''.
If the object, process ID, parameter, or return value are not relevant, we omit these elements in the notation: the form ``$L.\get_i(j)/v$ op-ex on list $L$, process $p_i$, index $j$ \af{as input value} and \af{returned output} value $v$'' could simply be written ``$\get()$''.
If only some (but not all) parameters can have an arbitrary value, we can use the ``$-$'' notation: the form ``$S.\set(k,v)$ op-ex on key-value store $S$ for key $k$ and for any value $v$'' could be written ``$S.\set(k,-)$''.
\ta{As notifications have no input, the $()$ parameter parentheses are omitted for notifications: the form ``$\receive$ notification op-ex of message $m$ by receiver process $p_i$ from sender process $p_j$'' is denoted ``$\receive_i/(m,j)$''.}
Lastly, we denote by ``\opp'' any \af{op-ex operation:} the form ``any op-ex on register $R$ (\wwrite or \rread)'' could thus be written ``$R.\opp()$''.
Pending op-exes and complete op-exes with no return value can be respectively denoted with a $\bot$ and $\varnothing$ symbol \af{as their output value.}
For instance, the forms ``All pending op-exes'' and ``All complete op-exes with no return value'' can be written ``$\opp()/\bot$'' and ``$\opp()/\varnothing$'', respectively.
By abuse of notation, to refer to any op-ex of a set $O$ that follows a given form $f$, we can write $f \in O$, for example $R.\wwrite(v) \in O$.
%Nevertheless, this abuse of notation must be used with care, because two op-exes that have the same form are not necessarily the same op-ex (\eg a given process can call 2 times the same operation with the same parameters, which would result in one same form for 2 different op-exes).

\subsection{Histories}
%\nn{[NN: I like this section. Maybe we can avoid talking about system executions until we reach this point. What I would possibly change is the following sentence: "...in which the sets \Objects and \Operations do not change.". Instead, I would write "...for a given set of \Objects and \Operations.". This is also consistent with the definition of execution fragments where we have a finite set of operations and the fragment is defined over those operations. As the set of Operations may grow then more executions can be defined. Since this definition applies to multiple sets of Operations the "do not change" expression may skew the meaning and give the impression to the reader that we are talking about a specific set of Operations.]}

\af{The six sets of the framework are used to describe an execution of a distributed system. We are interested in describing all possible executions of the system for a given set of \Objects and \Operations.
Hence, each such system execution is described with \Events, \OpExes, \EventOrder, and \Processes. We call this a \emph{history}. Note that a history captures a system execution via the events and op-exes observed in the objects' interfaces.}

Hence, a history of a distributed system is a tuple $H=(E,\eo,O, P)$, where $E=\Events$ is the set of events, $\eo=\EventOrder$ is a \emph{strict total order} of events in $E$, $O=\OpExes$ is the set of op-exes, and $P=\Processes$ is the set of processes.
There are some natural constraints on a history that cannot be expressed directly on the diagram of \Cref{fig:data-model}.
%Their formal specifications are given in \Cref{sec:Constraints}. 
\begin{itemize}
%    \item \textbf{Total order of events}: The $\eo$ relation is a strict total order over \Events.
    \item \textbf{Event validity}: Every event must be part of exactly one op-ex.
%    \item \textbf{Op-ex validity}: The invocation and the response events of an op-ex must be distinct, and if they are both not $\bot$, then the invocation must precede the response w.r.t. the event order.
    \item \textbf{Operation validity}: If an operation is a notification, then all its op-exes must be notification op-exes, otherwise, they must all be complete or pending op-exes.
\end{itemize}
%We denote by \Histories the set of all histories with the same set \Objects and \Operations.

\af{
In the sequel, we will often consider subhistories of a history $H=(E,\eo,O, P)$. 
For instance, it is useful to consider the subhistory obtained by projecting a history to one particular object. Hence, given history $H=(E,\eo,O,P)$ and object $x$, we denote by $H|x$ the subhistory containing only the events of $E$ applied to $x$ and the op-exes of $O$ applied to $x$, and the corresponding subset of \eo.
The concepts of legality, consistency, and correctness defined in the next sections can hence be applied to histories and subhistories.}
% , such that:
% \begin{itemize}%[leftmargin=5mm]
%     \item $O = H.\opexes$: $O$ is the set of op-exes of $H$,
%     \item $E = \{i,r \mid (i,r) \in O\}$: $E$ is the set of all events appearing in $O$,
%     \item $\eo\, = \{(e,e') \in \,\eo_G\, \mid e,e' \in E\}$: \eo is the subset of $\eo_G$ pertaining to events of $E$.
% \end{itemize}

% \JW{Maybe add Histories definition here.}

\subsection{Process Faults} \label{sec:faults}

Our framework considers two very general types of process failures: \emph{omission faults} and \emph{Byzantine faults}.
In the framework's model presented in \Cref{fig:data-model}, for any $p \in \Processes$, omission faults concern $p$ only if $p.\type = \faulty.\omitting$, and Byzantine faults concern $p$ only if $p.\type = \faulty.\byzantine$.
Processes $p$ of type $p.\type = \correct$ are subject to none of these faults.

Omissions correspond to missing events, like op-ex invocations that do not have a matching response, for whatever reason, producing pending op-exes.
We assume that such omitting processes, although they may suffer omissions at any time, follow their assigned algorithm.
A {\em crash fault} is a special case of omission fault on a process $p$, where there exists a point $\tau$ in the sequence of events of the history (the crash point) such that, 
%if $p$ has omissions (\ie pending op-exes), then they concern all op-exes that were invoked before $\tau$ but that did not terminate by $\tau$, and only them. Furthermore, after the crash at 
$p$ has no event after $\tau$ in the sequence.
\af{Observe that omission faults also account for benign dynamic process failures like crash-recovery models.}

On the other hand, Byzantine processes may arbitrarily deviate from their assigned algorithm (for instance, because of implementation errors or attacks).
Strictly speaking, given that their behavior can be arbitrary, we cannot say that Byzantine processes execute actual op-exes on the same operations and objects as non-Byzantine processes.
For instance, Byzantine processes may simulate op-exes that can appear legitimate to other processes. We further discuss Byzantine faults in \Cref{sec:histories}.

Observe that by adding more constraints to the model, new failure subtypes can be derived from these two basic failure types.
%(omissions and Byzantines).
% %For instance, the same way crashes are modeled as a subtype of omission failures, subtypes of Byzantine faults can also be modeled by restricting the types of histories that the \cg{\byzhist} function can return.

%For example, crashes can be modeled by adding a constraint on omission faults that formalizes the condition stated before.
%Subtypes of Byzantine faults can also be modeled by restricting the types of histories that the \byzhist function can return.

\section{Object Specification and History Legality} \label{sec:objspec}

\subsection{Object Specification}

\af{In our formalism, we specify an object using a set of conditions that are defined for the operations and applied to the op-exes of the} object.
There are two types of such conditions (that we express as predicates): preconditions (invocation validity) and postconditions (safety and liveness).
Every operation of every object has a \validp precondition and two \safep and \livep postconditions (if not given, they are assumed to be satisfied).
We will use the register object as an example to understand better the notations defined below.
A register $R$ is associated with two operations, $R.\rread()/v$ and $R.\wwrite(v)$, where the former returns the value of $R$ and the latter sets the value of $R$ to $v$, respectively.
%(For concrete examples of object specifications, see \Cref{sec:rbcast,sec:codef} and \Cref{sec:comm-media-examples}.\AF{References to be revised when sections are final})

%\af{In this section we present how objects can be specified using operation conditions in a way that is orthogonal to and composable with consistency models. This is achieved by abstracting the consistency model with a binary order relation \oo defined over a set of op-exes $O$. The relation \oo and how it captures the consistency of the object will be described in detail in \Cref{sec:consistency}.}

\smallskip
\noindent{\bf Op-ex context.}
The context of an op-ex $o$ is the set of all op-exes preceding $o$ \emph{in the same object} with respect to a binary relation \oo defined over $O$.

\begin{definition}[Context of an op-ex]
The \emph{context} of an op-ex $o \in O$ with respect to a binary relation \oo over $O$ is defined as
%$
$
    \ctx(o,O,\oo) \triangleq (O_c,\oo_c),
%$
$
where $O_c \triangleq \{o' \in O \mid o' \oo o, o.\op.\obj=o'.\op.\obj\}$ and $\oo_c\, \triangleq \{(o',o'') \in \,\oo\, \mid o',o'' \in O_c \cup \{o\}\}$.
\end{definition}
For instance, the context of a $R.\rread()/v$ op-ex is made of all the previous op-exes of register $R$ with respect to a given \oo relation.
Note that pending op-exes can be part of the context of other op-exes, and thus influence their behavior (and especially their return value in the case of complete op-exes or notifications).\smallskip

\noindent{\bf Preconditions.}
The \emph{preconditions} of an object are the use requirements of this object by its client that are needed to ensure that the object works properly.
Typically, a precondition for the operation on an object can require that the input (parameters) of this operation is valid, or that some op-ex required for this operation to work indeed happened before.
For instance, we cannot have a $\rread$ op-ex in register $R$ if there was no preceding $\wwrite$ op-ex in $R$.
Another example of precondition for the $\mathsf{divide}(a,b)/d$ operation that returns the result $d$ of the division of number $a$ by number $b$, is that $b$ must not be zero.
% Another example of precondition is that of requiring that we cannot call $S.\pop()$ on an empty stack object $S$. 
These preconditions are given for each operation of an object by the invocation validity predicate \validp.
\begin{definition}[Invocation validity predicate \validp]
Given an operation $\opp \in \Operations$, its \emph{invocation validity predicate} $\opp.\validp(o,\ctx(o,O,\oo))$ indicates whether an op-ex $o=(i{\neq}\bot,r) \in \OpExes$ of \opp (\ie $o.\op \equiv \opp$) respects the usage contract of the object 
%such that $\opp.\validp(o,\ctx(o,O,\oo))$ evaluates to true if and only if $i$ is an allowed invocation for $o$ 
given its context $\ctx(o,O_c,\oo_c)$.
\end{definition}
%
% We can see in the above definition that the $\opp.\validp(o,\ctx(o,O,\oo))$ predicate is not defined if $o=(\bot,r)$, that is, if $o$ is a notification.

%\smallskip
\noindent{\bf Postconditions.}
The \emph{postconditions} of an object are the guarantees provided by this object to its client.
The postconditions are divided into two categories: \emph{safety} and \emph{liveness}.
Broadly speaking, safety ensures that nothing bad happens, while liveness ensures that something good will eventually happen.
For a given op-ex $o$, safety is interested in the prefix of op-exes of $o$ (\ie its context), while liveness is potentially interested in the whole history of op-exes. 
For example, for a register object $R$, the safety condition is that the value $v$ returned by a $R.\rread()/v$ is (one of) the latest written values, while the liveness condition is that the $R.\rread()$ and $R.\wwrite()$ op-exes always terminate.
These postconditions are given for each operation of an object by the safety predicate \safep and the liveness predicate \livep.

\begin{definition}[Safety predicate \safep]
\sloppy
Given an operation $\opp \in \Operations$, its \emph{safety predicate} $\opp.\safep(o,\ctx(o,O,\oo))$ indicates whether $r.\val$ is a valid return value for op-ex $o=(i,r{\neq}\bot) \in \OpExes$ of \opp (\ie $o.\op \equiv \opp$) 
%has a permissible return value (\ie $r.\val$ is a valid return value for $o$) 
in relation to its context
$\ctx(o,O_c,\oo_c)$.
% such that $\opp.\safep(o,\ctx(o,O,\oo))$ evaluates to true if and only if $r.\val$ is a valid return value for $o$ given its context $\ctx(o,O,\oo)$.
\end{definition}

We can see in the above definition that the $\opp.\safep(o,\ctx(o,O,\oo))$ predicate is not defined if $o=(i,\bot)$, that is, if $o$ is a pending op-ex. 

\begin{definition}[Liveness predicate \livep]
Given an operation $\opp \in \Operations$, its \emph{liveness predicate} $\opp.\livep(o,H,\oo)$ indicates whether an op-ex $o=(i,r) \in \OpExes$ of \opp (\ie $o.\op \equiv \opp$) respects the liveness specification of \opp.
% , such that $\opp.\livep(o,O,\oo)$ evaluates to true if and only if this liveness specification is satisfied.
\end{definition}

%\smallskip
\noindent{\bf Example.} 
As a complete example, the following is the specification of a single-writer single-reader (SWSR) shared register $R$ using the above conditions. Let $p_w$ and $p_r$ be the writer and reader processes, respectively. Recall that $\ctx(o,O,\oo)=(O_c,\oo_c)$ is the context of op-ex $o$. \af{Let us define predicate $\OpTermination(o) \triangleq (o.\proc.\type = \correct) \implies (o \not\equiv \opp()/\bot)$, which forces an op-ex $o$ to have a response if executed by a correct process.}

\paragraph{Operation \rread.} If $o \equiv R.\rread()/v$, then we have the following conditions.
\begin{align*}
    \rread.\validp(o,(O_c,\oo_c)) &\triangleq (o.\proc=p_r) \land (\Exists o' \equiv R.\wwrite(-) \in O_c). \\
    \rread.\safep(o,(O_c,\oo_c)) &\triangleq v \in \{v' \mid \Exists o' {\equiv} R.\wwrite(v') \in O_c, \Nexists o'' {\equiv} R.\wwrite(-) \in O_c': o' \oo_c o''\}. \\
    \rread.\livep(o,H,\oo) &\triangleq \OpTermination(o).
\end{align*}

The \validp predicate states that only the reader process can read and the register must have been previously written.
The \safep predicate states that a read must return one of the latest written values.
\af{The \livep predicate prevents a \rread op-ex of a correct process to be pending.}

\paragraph{Operation \wwrite.} If $o \equiv R.\wwrite(v)$, then we have the following condition.
\begin{align*}
    \wwrite.\validp(o,(O_c,\oo_c)) &\triangleq o.\proc=p_w.
    \hspace{3em}
    \wwrite.\livep(o,H,\oo) \triangleq \OpTermination(o).
\end{align*}
%
%The \livep predicate states that a \wwrite op-ex must terminate if executed by a correct process. 
Observe that the \safep predicate is not provided and is hence assumed to be always satisfied.

% \smallskip
% \noindent{\bf History legality.} 
\subsection{History Legality}

We now define the notion of history legality.
\begin{definition}[\af{History validity, safety, and liveness}]
Given a history $H=(E,\eo,O,P)$ and a relation \oo on $O$, the following predicates define the \emph{validity}, \emph{safety}, and \emph{liveness} of $H$.
\begin{align*}
    \Validity(H,\oo) &\triangleq \Forall o=(i{\neq}\bot,r) \in O: o.\op.\validp(o,\ctx(o,O,\oo)). \\
    \Safety(H,\oo) &\triangleq \Forall o=(i,r{\neq}\bot) \in O: o.\op.\safep(o,\ctx(o,O,\oo)). \\
    \Liveness(H,\oo) &\triangleq \Forall o=(i,r) \in O: o.\op.\livep(o,H,\oo).
\end{align*}
\end{definition}

Notice that, for an op-ex $o$, if $o$ is a notification, we do not need to verify its invocation validity, and if $o$ is a pending op-ex, we do not need to verify its safety.

\begin{definition}[Legality condition]
\label{def:legality}
Given a history $H=(E,\eo,O,P)$ and a relation \oo on $O$, the \emph{legality condition} is defined as
%\begin{align*}
    $$\Legality(H,\oo) \triangleq \{\Validity(H,\oo), \Safety(H,\oo), \Liveness(H,\oo)\}.$$
%\end{align*}
\end{definition}
\Legality is defined as a set of clauses (or constraints) on a history $H$ and an op-ex relation \oo.
Informally, a history $H$ is legal if and only if all clauses of $\Legality(H,\oo)$ evaluate to \ttrue.

\section{Consistency Conditions}
\label{sec:consistency}

\af{In the previous section we have presented how object legality can be specified using operation conditions, abstracting the consistency model with a binary order relation \oo. In this section we describe how to define order relations \oo to extend legality with different consistency conditions.}
We first define reusable predicates describing certain constraints on the op-ex order \oo (\Cref{sec:opex-orders}) and then we define common consistency conditions (\Cref{sec:cons-cond}) to showcase the power of the formalism. In addition, we provide the definitions of set-linearizability~\cite{N94} and interval-linearizability~\cite{CRR18}, which are new interesting consistency conditions; objects with these consistencies do not have sequential specifications~\cite{CRR23} (\Cref{sec:set-int-lin}).

\subsection{Op-ex Order Relations} \label{sec:opex-orders}

We first define partial and total orders within our framework.

\begin{definition}[Generic strict orders]
\label{def:generic-orders}
Given an arbitrary set $S$ and an arbitrary binary relation $\prec$ over the elements of $S$, the following predicates define \emph{strict partial order} and \emph{strict total order}.
\begin{align*}
    \PartialOrder(S,\prec) \triangleq& \,(\Forall e \in S: e \not\prec e ) \land (\Forall e,e',e'' \in S: e \prec e' \prec e'' \implies e \prec e''). \\
    \TotalOrder(S,\prec) \triangleq& \,\PartialOrder(S,\prec) \land (\Forall e,e' \in S, e \neq e': e \prec e' \lor e' \prec e).
\end{align*}
\end{definition}

We call these orders ``strict'' because they are irreflexive.
%, but we do not specify ``strict'' in the predicate names for the sake of simplicity.
Observe that the asymmetry property is redundant for strict orders because it directly follows from irreflexivity and transitivity.
As we can also see, total order adds the connectedness property to partial order.

\begin{definition}[Basic orders]
Given a history $H=(E,\eo,O,P)$ and 
%set of op-exes $O$, 
a relation \oo on the set of op-exes $O$, 
%and a total order \eo on the set of events of $O$, 
the following predicates define \emph{history order}, \emph{process order} and \emph{FIFO order}:\vspace{-.6em}
\begin{align*}
    \OpHistOrder(H,\oo) &\triangleq \Forall o=(i,r),o'=(i',r') \in O, r \neq \bot: 
    \\&\hspace{1em} 
    ((i' \neq \bot \land r \eo i') \lor (i'=\bot \land r \eo r')) \implies (o \oo o' \land o' \not\!\!\oo o). \\
    \OpProcOrder(H,\oo) &\triangleq \Forall p_i \in P:
    %\\&\hspace{1.3em}
    \OpHistOrder(H|p_i,\oo) \land \TotalOrder(O|p_i,\oo). \\
    \OpFIFOOrder(H,\oo) &\triangleq \Forall o_i,o_i' \in O|p_i, o_j,o_j' \in O|p_j: 
    \\&\hspace{1.3em}
    (o_i \oo o_i' \oo o_j \oo o_j' \land o_i \oo o_j') \implies (o_i \oo o_j \land o_i' \oo o_j').
\end{align*}
\end{definition}\vspace{-.1em}

Note that the above predicates do not define ``classic'' order relations (strict or not) per se, as they do not guarantee all the required properties.
These predicates define how a ``visibility'' relation \oo between op-exes of history $H$ should look in different contexts, in the sense that the behavior of an op-ex is determined by the set of op-exes it ``sees''.%\vspace{-.1em}
\begin{itemize}%[leftmargin=5mm]
    \item In \OpHistOrder, we check if \oo respects the event order \eo: if two op-exes are not concurrent with respect to the \eo order, then the oldest one must precede the newest one.
    Note that we distinguish whether the second op-ex $o'$ is a notification or not.

    \item In \OpProcOrder, we check if \oo totally orders the op-exes of each process $p_i$ while also respecting the event order of $p_i$.

    \item In \OpFIFOOrder, we check that, if an op-ex of any given process sees some other op-ex of another process, then it also sees all the previous op-exes of the latter process.
    Furthermore, we also check that the set of op-exes seen by the op-exes of a given process is monotonically increasing, \ie that a given op-ex sees all the op-exes that its predecessors (of the same process) saw.
    More details about \OpFIFOOrder can be found in \Cref{sec:fifo-addendum}.
\end{itemize}

\subsection{Classic Consistency Conditions} \label{sec:cons-cond}
This section defines common consistency conditions from the literature.
% \af{We represent a consistency condition as a set of clauses, that extend those in \Legality and constrain the \oo op-ex order.}
\ta{They are represented as sets of clauses, extending those in \Legality and constraining the \oo op-ex order.}

\sloppy
\begin{definition}[Classic consistency conditions] \label{def:classic-consistency}
Given a history $H=(E,\eo,O,P)$ and a relation \oo on $O$, the following sets of clauses respectively define \emph{process consistency}, \emph{FIFO consistency~\cite{LS88}}, \emph{causal consistency~\cite{ANBKH95,PMJ16}}, \emph{serializability~\cite{BHG87}}, \emph{sequential consistency~\cite{L79}} and \emph{linearizability~\cite{HW90}}.

\begin{align*}
    \ProcessConsistency(H,\oo) &\triangleq \Legality(H,\oo) \cup \{\OpProcOrder(H,\oo)\}. \\
    \FIFOConsistency(H,\oo) &\triangleq \ProcessConsistency(H,\oo) \cup \{\OpFIFOOrder(H,\oo)\}. \\
    \CausalConsistency(H,\oo) &\triangleq \FIFOConsistency(H,\oo) \cup \{\PartialOrder(O,\oo)\}. \\
    \Serializability(H,\oo) &\triangleq \Legality(H,\oo) \cup \{\TotalOrder(O,\oo)\}. \\
    \SequentialConsistency(H,\oo) &\triangleq \Serializability(H,\oo) \cup \CausalConsistency(H,\oo). \\
    \Linearizability(H,\oo) &\triangleq \SequentialConsistency(H,\oo) \cup \{\OpHistOrder(H,\oo)\}.
\end{align*}
\end{definition}

Note that the above \Serializability condition strays away from the traditional definition of serializability, as it considers that a transaction (originally defined as an atomic sequence of op-exes~\cite{BHG87}) is the same as a single op-ex.
Likewise, as discussed in detail in \Cref{sec:fifo-addendum}, our definition of \FIFOConsistency differs from the traditional definition of PRAM consistency.

\begin{figure}[t]
    \centering
    \begin{tikzpicture}
    
\tikzmath{
\xOffset=3.2;
\yOffset=.75;
}

\newcommand{\fontSize}{\footnotesize}

% LEGALITY

\node[draw] (leg) at (0,\yOffset) {\fontSize\Legality};

% TOTAL ORDER CONSISTENCIES

\node[draw] (ser) at (\xOffset,\yOffset) {\fontSize\Serializability};

\node[draw] (seq) at (2*\xOffset,\yOffset) {\fontSize\SequentialConsistency};

\node[draw] (lin) at (3*\xOffset,\yOffset) {\fontSize\Linearizability};

% OTHER CONSISTENCIES

\node[draw] (proc) at (0,0) {\fontSize\ProcessConsistency};

\node[draw] (fifo) at (\xOffset,0) {\fontSize\FIFOConsistency};

\node[draw] (caus) at (2*\xOffset,0) {\fontSize\CausalConsistency};

% Set- and Interval- LINEARLIZABILITY
\node[draw] (intlin) at (\xOffset,2*\yOffset) {\fontSize\IntLinearizability};

\node[draw] (setlin) at (2*\xOffset,2*\yOffset) {\fontSize\SetLinearizability};

% TOTAL ORDER SET
\draw[rounded corners,ultra thick,red,dashed] (\xOffset-1.2,\yOffset+.35) rectangle (3*\xOffset+1.2,\yOffset-.35) {};
\node[red] at (3*\xOffset,.2) {\fontSize\bf Total order};

% LINKS
\draw[->] (leg) -- (proc);
\draw[->] (leg) -- (ser);
\draw[->] (proc) -- (fifo);
\draw[->] (fifo) -- (caus);
\draw[->] (caus) -- (seq);
\draw[->] (ser) -- (seq);
\draw[->] (seq) -- (lin);

\draw[->] (leg) -- (0,2*\yOffset) -- (intlin);
\draw[->] (intlin) -- (setlin);
\draw[->] (setlin) -- (3*\xOffset,2*\yOffset) -- (lin);
    
\end{tikzpicture}
    \vspace{-.5em}
    \caption{Hierarchy of the consistencies defined in this paper and their relative strengths \cite{P16}.}\vspace{-.7em}
    \label{fig:cons-graph}
\end{figure}
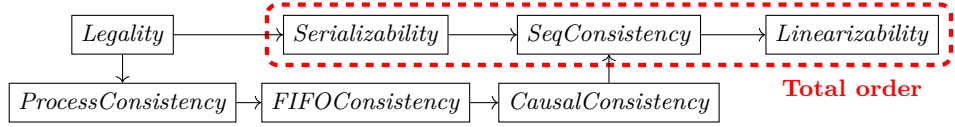

We illustrate in \Cref{fig:cons-graph} the relations between all the consistency conditions defined in this section.
In this figure, if we have $\cond_1 \rightarrow \cond_2$ for two consistency conditions $\cond_1$ and $\cond_2$, then it means that $\cond_2$ is stronger than $\cond_1$, and thus, that $\cond_2$ imposes more constraints on the order of op-exes.
The conditions inside the red rectangle are conditions that impose a total order of op-exes.
Combining these consistency conditions with other object specifications allows us to obtain multiple consistent object specifications (see \Cref{sec:obj-cons-examples}).
%\TA{This figure could maybe be moved to the appendix, where we also add the set- and interval-linearizability conditions (if so we should change "defined in this section" to "defined in this paper")}

\subsection{Set- and Interval-Linearizability Definitions} \label{sec:set-int-lin}

This section adds set-linearizability~\cite{N94} and interval-linearizability~\cite{CRR18} to the repertoire of consistency conditions supported by the framework. These are interesting because they define objects with no sequential specification~\cite{CRR23}.
We first define interval and set orders.

\begin{definition}[Set and interval orders]
Given a history $H=(E,\eo,O,P)$ and a relation \oo on $O$, the following predicates define \emph{interval} and \emph{set orders}:
\begin{align*}
    \OpIntOrder(H,\oo) &\triangleq (\Forall o \in O: o \not\!\!\oo o) \land (\Forall o,o' \in O, o \neq o': o \oo o' \lor o' \oo o) \\
    % &\hspace{.3em}\land (\Forall o,o',o'' \in O: o \not\!\!\!\oo o' \not\!\!\!\oo o'' \implies o \not\!\!\!\oo o''). \\
    &\hspace{.3em}\land (\Forall o,o',o'' \in O: o \oo o'' \implies (o \oo o' \lor o' \oo o'')). \\
    \OpSetOrder(H,\oo) &\triangleq \OpIntOrder(H,\oo)
    % &\hspace{.3em}
    \land (\Forall o,o',o'' \in O, o \neq o'': o \oo o' \oo o'' \implies o \oo o'').
\end{align*}
\end{definition}

% \TA{Changed last clause of \OpIntOrder, tell me if it makes it clearer}\AF{I does, IMHO}

% \begin{itemize}
%     \item 
    In \OpIntOrder, an op-ex is represented as a time interval, and we check that it can see only all op-exes with which it overlaps, and all previous op-exes.
    The first clause guarantees irreflexivity (an op-ex cannot see itself), the second connectedness (all op-exes are in relation with each other), and the last one ensures that no forbidden pattern is present.

    % \item 
    In \OpSetOrder, we check that an op-ex can see only all other op-exes of its equivalence class (except itself), and all previous op-exes.
    In addition to \OpIntOrder, \OpSetOrder guarantees a weakened version of transitivity, allowing two-way cycles between two or more op-exes, thus creating equivalence classes.
    Let us remark that the weakened transitivity property of \OpSetOrder implies the last clause of \OpIntOrder.
% \end{itemize}

Leveraging the above order relations, we can define set- and interval-linearizability.

\begin{definition}[Set- and interval-linearizability] \label{def:additional-consistency}
Given a history $H=(E,\eo,O)$ and a relation \oo on $O$, the following predicates define \emph{set-linearizability~\cite{N94}} and \emph{interval-linearizability~\cite{CRR18}}.
\begin{align*}
    &\IntLinearizability(H,\oo) \triangleq \Legality(H,\oo) \;\cup
    %\\& \hspace{1.3em}
    \{\OpHistOrder(H,\oo), \OpIntOrder(H,\oo)\}. \\
    &\SetLinearizability(H,\oo) \triangleq  \IntLinearizability(H,\oo) \cup \{\OpSetOrder(H,\oo)\}.
\end{align*}
\end{definition}

% \TA{We can also mention that we can we can define classic linearizability using set-linearizability, but maybe that's not very interesting}

\begin{figure}[t]
\centering
\begin{minipage}{.48\textwidth}
    \centering
    \begin{tikzpicture}
    
\tikzmath{
\xLength=6;
\yOffset=.6;
\circleR=.2;
\xLin1 = 1.8;
\xLin2 = 4.3;
}

\newcommand{\fontSize}{\footnotesize}

\node (p1) at (0, 3*\yOffset) {$p_1$};
\node (p2) at (0, 2*\yOffset) {$p_2$};
\node (p3) at (0, \yOffset) {$p_3$};
\node (time) at (.2, .05) {\textbf{time}};

\node (time) at (4.5, -.2) {\fontSize linearization points};

\draw[->] (.2, 3*\yOffset) -- (\xLength, 3*\yOffset);
\draw[->] (.2, 2*\yOffset) -- (\xLength, 2*\yOffset);
\draw[->] (.2, \yOffset) -- (\xLength, \yOffset);
\draw[->,thick] (.6, 0) -- (\xLength, 0);

\node (prop1) at (2, 3*\yOffset+.2) {\fontSize $\propose(1)/\{1,2\}$};
\node (prop2) at (3.2, 2*\yOffset+.2) {\fontSize $\propose(2)/\{1,2\}$};
\node (prop3) at (4.5, \yOffset+.2) {\fontSize $\propose(3)/\{1,2,3\}$};

\draw[latex-latex,line width=.05cm,blue] (.7, 3*\yOffset) -- (2.5, 3*\yOffset);
\draw[latex-latex,line width=.05cm,blue] (1.3, 2*\yOffset) -- (3, 2*\yOffset);
\draw[latex-latex,line width=.05cm,blue] (3.5, \yOffset) -- (5.5, \yOffset);

% linearization point 1
\node[circle,fill=purple,minimum size=\circleR cm,inner sep=0pt] (lin1p1) at (\xLin1, 3*\yOffset) {};
\node[circle,fill=purple,minimum size=\circleR cm,inner sep=0pt] (lin1p2) at (\xLin1, 2*\yOffset) {};
\node[circle,fill=purple,minimum size=\circleR cm,inner sep=0pt] (lin1t) at (\xLin1, 0) {};
\draw[-latex,line width=.05cm,purple] (\xLin1, 3*\yOffset) -- (\xLin1, .05);

% linearization point 2
\node[circle,fill=purple,minimum size=\circleR cm,inner sep=0pt] (lin2p3) at (\xLin2, \yOffset) {};
\node[circle,fill=purple,minimum size=\circleR cm,inner sep=0pt] (lin2t) at (\xLin2, 0) {};
\draw[-latex,line width=.05cm,purple] (\xLin2, \yOffset) -- (\xLin2, .05);

\end{tikzpicture}
    \caption{A set-linearizable execution of lattice agreement that is not linearizable.}
    \label{fig:set-linearizability}
\end{minipage}%
\hfil
\hspace{.2em}
\begin{minipage}{.48\textwidth}
    \centering
    \begin{tikzpicture}
    
\tikzmath{
\xLength=6;
\yOffset=.6;
\circleR=.2;
\xLin1 = 1.8;
\xLin2 = 3.5;
}

\newcommand{\fontSize}{\footnotesize}

\node (p1) at (0, 3*\yOffset) {$p_1$};
\node (p2) at (0, 2*\yOffset) {$p_2$};
\node (p3) at (0, \yOffset) {$p_3$};
\node (time) at (.2, .05) {\textbf{time}};

\node (time) at (4.5, -.2) {\fontSize linearization points};

\draw[->] (.2, 3*\yOffset) -- (\xLength, 3*\yOffset);
\draw[->] (.2, 2*\yOffset) -- (\xLength, 2*\yOffset);
\draw[->] (.2, \yOffset) -- (\xLength, \yOffset);
\draw[->,thick] (.6, 0) -- (\xLength, 0);

\node (prop1) at (2, 3*\yOffset+.2) {\fontSize $\propose(1)/\{1,2\}$};
\node (prop2) at (3.2, 2*\yOffset+.2) {\fontSize $\propose(2)/\{1,2,3\}$};
\node (prop3) at (4.8, \yOffset+.2) {\fontSize $\propose(3)/\{1,2,3\}$};

\draw[latex-latex,line width=.05cm,blue] (.7, 3*\yOffset) -- (2.5, 3*\yOffset);
\draw[latex-latex,line width=.05cm,blue] (1.3, 2*\yOffset) -- (4, 2*\yOffset);
\draw[latex-latex,line width=.05cm,blue] (3, \yOffset) -- (5.2, \yOffset);

% linearization point 1
\node[circle,fill=purple,minimum size=\circleR cm,inner sep=0pt] (lin1p1) at (\xLin1, 3*\yOffset) {};
\node[circle,fill=purple,minimum size=\circleR cm,inner sep=0pt] (lin1p2) at (\xLin1, 2*\yOffset) {};
\node[circle,fill=purple,minimum size=\circleR cm,inner sep=0pt] (lin1t) at (\xLin1, 0) {};
\draw[-latex,line width=.05cm,purple] (\xLin1, 3*\yOffset) -- (\xLin1, .05);

% linearization point 2
\node[circle,fill=purple,minimum size=\circleR cm,inner sep=0pt] (lin2p2) at (\xLin2, 2*\yOffset) {};
\node[circle,fill=purple,minimum size=\circleR cm,inner sep=0pt] (lin2p3) at (\xLin2, \yOffset) {};
\node[circle,fill=purple,minimum size=\circleR cm,inner sep=0pt] (lin2t) at (\xLin2, 0) {};
\draw[-latex,line width=.05cm,purple] (\xLin2, 2*\yOffset) -- (\xLin2, .05);

\end{tikzpicture}
    \caption{An interval-linearizable execution of lattice agreement that is not set-linearizable.}
    \label{fig:int-linearizability}
\end{minipage}
\vspace{-1em}
\end{figure}

To illustrate the set- and interval-linearizability consistency conditions, we provide some examples of executions of lattice agreement in \Cref{fig:set-linearizability,fig:int-linearizability}, taken from \cite{CRR23}.
Lattice agreement is an object that provides a single operation $\propose(v)/V$, where $v$ is a value and $V$ is a set of proposed values.
Its only safety property is that $V$ must contain all previously or concomitantly proposed values along with the value being proposed, and its only liveness property is that the \propose operation must eventually terminate for correct processes.
% \TA{Do a formal definition? Cite some paper for lattice agreement?}

In the set-linearizability example of \Cref{fig:set-linearizability}, op-exes form two equivalence classes $\{\propose(1), \propose(2)\}$ and $\{\propose(3)\}$.
% , so that
% \begin{itemize}
%     \item $\propose(1) \oo \propose(2)$ and $\propose(2) \oo \propose(1)$, and
%     \item $\propose(1) \oo \propose(3)$ and $\propose(2) \oo \propose(3)$.
% \end{itemize}
The last clause of \OpSetOrder enables the creation of said equivalence classes.
Indeed, we have $\propose(1) \oo \propose(2) \oo \propose(3)$ and $\propose(1) \oo \propose(3)$.
Besides, we also have $\propose(2) \oo \propose(1) \oo \propose(3)$ and $\propose(2) \oo \propose(3)$.
This shows that the forbidden pattern in set-linearizability is, for any op-exes $o,o',o''$ such that $o \neq o''$, there is $o \oo o' \oo o''$ and $o'' \not\!\!\!\oo o' \not\!\!\!\oo o$.
Hence, the weakened transitivity clause of \OpSetOrder precludes this pattern.
Note that the $o \neq o''$ condition in this clause prevents the contradiction of this clause with the irreflexivity property.

In the interval-linearizability example of \Cref{fig:int-linearizability}, equivalence classes can be more complex.
More precisely, two equivalence classes can intersect, but it does not necessarily imply that both equivalence classes can ``see'' each other.
Here, op-exes form two different equivalence classes $\{\propose(1), \propose(2)\}$ and $\{\propose(2), \propose(3)\}$.
This shows that the forbidden pattern in interval-linearizability is: for any op-exes $o,o',o''$ that are connected but not concurrent, \ie $(o \oo o' \oo o'' \land o'' \not\!\!\!\oo o' \not\!\!\!\oo o)$, we also have $o'' \oo o$.
The clause precluding this pattern is thereby $(o \oo o' \oo o'' \land o'' \not\!\!\!\oo o' \not\!\!\!\oo o) \implies o'' \not\!\!\!\oo o$.
However, because of the connectedness property, the $o \oo o' \oo o''$ part of the implication is redundant, and the formula can be simplified to $o'' \not\!\!\!\oo o' \not\!\!\!\oo o \implies o'' \not\!\!\!\oo o$.
Finally, by applying the contrapositive, we obtain the formulation of the clause that appears in \OpIntOrder: $o \oo o'' \implies (o \oo o' \lor o' \oo o'')$.
% \ta{The last clause of \OpIntOrder captures this new property, as it precludes directed one-way cycles between at least 3 op-exes.
% Indeed, we have $\propose(1) \oo \propose(2) \oo \propose(3)$ but not $\propose(3) \oo \propose(2)$.

% , and we have $\propose(2) \oo \propose(1) \oo \propose(3)$ but not , and
%     \item $ \propose(3)$
% \af{
% \begin{itemize}
%     \item 
%     \item .
% \end{itemize}
% }

%\JW{TO be Polished: Ability to specify quiescent consistency\\
%Response: The following is a specification of quiescent consistency: QuiescentConsistency(O, →, <) = Forall o = (i, r), o' = (i', r') in O, Exists o''=(i'',r'') in O: (r<=r'') and (r''<r') and (Forall op in O, (op.inv < r'') => (op.res <=r'')) => (o→o'). Every operation executed after a point in the history in which no operation is pending must be ordered after every operation executed before that point.}

\section{History Correctness}
\label{sec:histories}

%\NN{we cannot have a definition that separates Legality and Consistency, since in Sec 5 Legality's clauses are by definition "hidden" inside the Consistency clauses through the ProcessConsistency. I suggest however to separate the definition for a single arbitrary history and a byzantine history} 
%\AF{I agree that having two definitions gives clarity}

Thus far, we have defined \textit{Legality} (\Cref{sec:objspec}) and extended it to the \textit{Consistency} (Section \ref{sec:consistency}) of a history $H$ with respect to an op-ex order \oo, as a set of clauses \cond. 
We now define the correctness of a history $H$ with respect to a set of clauses \cond when no process is Byzantine.

\begin{definition}[Correctness predicate]
\label{def:Correct}
Given a history $H=(E,\eo,O,P)$ and a set of clauses \cond, the following predicate describes the correctness of $H$ with respect to \cond:
%and \oo, the \emph{correctness predicate} returns the conjunction of all the clauses of \cond, such that
$$
\Correctness(H, \cond) \triangleq \Exists \oo \in O^2: \bigwedge\limits_{C \in \cond} C(H,\oo).
$$
%where $\cond(H,\oo)$ the set of predicates that must be satisfied for $H$ with respect to order $\oo$.
\end{definition}

Intuitively, a history $H$ is correct with respect to a set of clauses \cond if it is possible to find a relation \oo on the op-exes of $H$, such that all clauses in \cond are satisfied.
As an example, $\Correctness(H,\ProcessConsistency)$ is the predicate that decides whether history $H$ is correct under \textit{ProcessConsistency}, which according to its definition in Section \ref{sec:consistency}, requires that the clauses composing \textit{Legality} (see Section \ref{sec:objspec}) and \textit{OpProcessOrder} (see Section \ref{sec:consistency}) are satisfied by $H$. 
\nn{Note by the above definition of correctness, it is apparent that the more clauses are present, the fewer histories, and thus executions, will satisfy all the clauses. This demonstrates that when stronger, more restrictive, semantics are considered, the more refined is the set of executions that can provide them.}

In a similar fashion, we can derive a more general definition where processes may exhibit Byzantine behavior. 
To model the set of all possible Byzantine behaviors, we introduce the \byzhist function, which, given a history $H$, returns the set of all modified histories $H'$, where the op-exes by non-Byzantine (\ie \correct or \omitting) processes are the same in $H$ and $H'$, but Byzantine processes are given any arbitrary set of pending op-exes.

\begin{definition}[Byzantine histories function]
\sloppy
Given history $H =(E,\!\eo,\!O,P)$, the $\byzhist(H)$ function returns the set of all possible histories $H'=(E',\eo',O',P)$ s.t.
\begin{align*}
    O' &= \{o \in O \mid o.\proc.\type \neq \faulty.\byzantine\} \cup \{\text{any arbitrary set of pending} \\
    &\hspace{1.7em}\text{op-exes by $p$} \mid \Forall p \in P, p.\type=\faulty.\byzantine\}, \\
    E' &= \{i,r \in (i,r) \in O'\} \text{, and } \eo' \subseteq E'^2: \eo \subseteq \eo'.
\end{align*}
\end{definition}

\sloppy{Informally, given a base history $H=(E,\eo,O,P)$ and a modified history $H'=(E',\eo',O',P) \in \byzhist(H)$, the set $O'$ is constructed by keeping all op-exes of $O$ by non-Byzantine processes and creating arbitrary pending op-exes for Byzantine processes, the set $E'$ is the set of all events appearing in $O'$, and the order $\eo'$ is an arbitrary total order on $E'$ extending \eo.}
Notice that we only populate the op-exes of Byzantine processes using pending op-exes, and not complete op-exes or notifications, as we do not guarantee anything for Byzantine processes.
Hence, we define correctness with Byzantine processes as follows.

\begin{definition}[Byzantine Correctness predicate]
\label{def:byz-correct}
Given a history $H=(E,\eo,O,P)$ and a set of clauses \cond, the following predicate describes the \textit{Byzantine} correctness of $H$: 
%with respect to \cond:
%and \oo, the \emph{correctness predicate} returns the conjunction of all the clauses of \cond, such that
$$
\Correctness(H, \cond) \triangleq \Exists H'=(E',\eo',O',P) \in \byzhist(H), \Exists \oo \in O'^2: \bigwedge\limits_{C \in \cond} C(H',\oo).
$$
\end{definition}

% \JW{Issue: It is not clear what C(H, ->) is.

% Response: C(H, ->) is the set of predicates that must hold for history H to be correct with respect to order ->. We can clarify the definition of Correctness of a history as

% $Correctness(H,C) = Exists → : Legality(H, →) and Consistency(H, →).$

% Which will emphasize the separation of legality and consistency.}

Intuitively, a history $H$ with Byzantine processes is correct with respect to a set of clauses \cond if it is possible to construct a modified history $H'$ (where Byzantine processes perform arbitrary op-exes) and an arbitrary relation \oo on the op-exes of $H'$, such that all clauses in \cond are satisfied. 
To create the set of all possible modified histories, we use the \byzhist function.
\af{In other words, history $H$ is correct if and only if we can ``fix'' it by changing only the op-exes of the Byzantine processes to make it correct with respect to $\cond$.}
In the absence of Byzantine processes, \Cref{def:byz-correct} collapses to \Cref{def:Correct}.
%\af{\sout{Depending on the failure model, Byzantine processes may or may not exist in the system, and in the latter case, the modified history $H'$ is equivalent to the base history $H$. Recall that only pending op-exes are created for Byzantine processes, as we do not need to guarantee safety and liveness for them.}}
%(or $H$ satisfies \cond)
%\TA{If this integration of Byzantine faults seems correct to you, we just have to change the text in the model at \Cref{sec:framework}, the rest of the paper is not impacted. We may just want to stress this fact here and there }

% ------------ EXAMPLES ------------- %

\section{Examples of Object Specifications} \label{sec:obj-spec-examples}
To exemplify the usability of our formalism, we specify reliable broadcast, shared memory, and message passing as concurrent objects. The modularity of our
formalism is further demonstrated by combining the consistency conditions of Section~\ref{sec:consistency} with the broadcast and share memory object  
specifications, obtaining broadcast and shared memory hierarchies.

% --------- RELIABLE BROADCAST ---------- %

\subsection{Reliable Broadcast Object}
\label{sec:examples}

% \subsection{Examples of object specifications} \label{sec:obj-spec-examples}
We begin the object specification examples by formally defining the celebrated \emph{reliable broadcast} problem~\cite{B87}.
% Let us mention that \Cref{sec:comm-media-examples} provides additional examples of specifications: \emph{shared memory} and \emph{asynchronous message-passing}, two of the most fundamental communication models of distributed computing, specified as simple objects.
%

Let us remark that our formalism allows us to create object specifications and consistency hierarchies that are completely independent of the failure model: they hold both for omission (\eg crashes) and Byzantine faults.
Let us also observe that our framework's modularity enables us to define various consistent object specifications effortlessly by simply combining an object definition with a consistency condition.
For example, by combining \Linearizability with reliable broadcast (as specified below), we obtain another abstraction, {linearizable broadcast~\cite{CK21}.
%\TA{This last point seems less impactful now that we know linearizable broadcast is not new, we may want to shorten it}

In the following, the specifications consist of a list of operations with their correctness predicates, \validp, \safep, and \livep.
For concision, if we do not explicitly specify the \validp, \safep or \livep predicates for some operation, then it means that implicitly, these predicates always evaluate to \ttrue.
Furthermore, we use in the following logical formulas the $\gets$ symbol to denote an \JW{affectation ->}\jw{assignment} of a value to a variable in the predicates.
For convenience, we define below a shorthand for referring to the set of correct processes.

\begin{definition}[Set of correct processes]
For a given history $H=(E,\eo,O,P)$, we define the function $\corr(H)$ that returns the set of correct processes of $H$, \ie
$\corr(H) \triangleq \{p \in P \mid p.\type = \correct\}.$
\end{definition}

Below, we define a reusable specification property for liveness checking that, if the process of an op-ex is correct, then this op-ex must terminate.

\begin{definition}[Op-ex termination]
\label{def:termination}
For an op-ex $o$, the \emph{op-ex termination} liveness property is defined as
% \begin{align*}
$
    \OpTermination(o) \triangleq o.\proc.\type = \correct \implies o \not\equiv \opp()/\bot.
$
% \end{align*}
\end{definition}

% We now present three examples of object specifications.
% We first show that our formalism can describe the different communication models of distributed computing, such as  (in \Cref{sec:shared-mem}) or \emph{asynchronous message-passing} (in \Cref{sec:async-msg-pass}) as simple concurrent objects, that provide operations accessible by the processes.
% We also present a specification of reliable broadcast (in \Cref{sec:rbcast}).

% ------------ RELIABLE BCAST ------------ %

\subsubsection{Reliable Broadcast Specification} \label{sec:rbcast}
\emph{Reliable broadcast} is a fundamental abstraction of distributed computing guaranteeing an \textit{all-or-nothing} delivery of a message that a sender has broadcast to all processes of the system, and this despite the potential presence of faults (crashes or Byzantine)~\cite{B87}.
This section considers the multi-sender and multi-shot variant of reliable broadcast, where every process can broadcast multiple messages (different messages from the same process are differentiated by their message ID). 
A \emph{reliable broadcast} object $B$ provides the following operations:
\begin{itemize}%[leftmargin=5mm]
    \item $B.\rbroadcast(m,\id)$: broadcasts message $m$ with ID \id,
    
    \item $B.\rdeliver/(m,\id,i)$ (notification): delivers message $m$ with ID \id from process $p_i$.
\end{itemize}

In the following, we consider a multi-shot reliable broadcast object $B$, a set of op-exes $O$, a relation \oo on $O$, an op-ex $o \in O$ and its context $(O_c,\oo_c) = \ctx(o,O,\oo)$.

\paragraph{Operation \rbroadcast.} If $o \equiv B.\rbroadcast_i(m,\id)$, then we have the following.
\begin{align*}
    \rbroadcast.\validp(o,(O_c,\oo_c)) \!&\triangleq\! \Nexists \rbroadcast_i(-,\id) \in O_c. \\
    \rbroadcast.\livep(o,H,\oo) \!&\triangleq\! \OpTermination(o) \\
    %\\ &\hspace{.3em}
    &\hspace{-1em}\land (\Forall p_j \in \corr(H), \Exists o' \equiv B.\rdeliver_j/(m,\id,i) \in O: o \oo o').
\end{align*}

The \validp predicate states that a process cannot broadcast more than once with a given ID.
The \livep predicate states that a \rbroadcast op-ex must terminate if a correct process made it, and must trigger matching \rdeliver op-ex on every correct process.

\paragraph{Operation \rdeliver.} If $o \equiv B.\rdeliver_i/(m,\id,j)$, then we have the following.
\begin{align*}
    \rdeliver.\safep(o,(O_c,\oo_c)) &\triangleq \\
    C \gets \{B.\rbroadcast&_j(m',\id') \in O_c \mid \Nexists B.\rdeliver_i/(m',\id',j) \in O_c\}, \\
    \hspace{1.4em}F \gets \{(m',\id') \mid \Forall b,&b' \in C, b \equiv B.\rbroadcast_j(m',\id'): b' \not\!\!\oo_c b\}: 
    %\\
    %&\hspace{1.4em}
    (m,\id) \in F. \\
    \rdeliver.\livep(o,H,\oo) &\triangleq p_i \in \corr(H) \\
    &\implies (\Forall p_j \in \corr(H), \Exists B.\rdeliver_j/(m,\id,k) \in O).
\end{align*}

The \safep predicate states that a delivery must return one of the first broadcasts that have not been delivered with respect to $\oo_c$.
In \safep, $C$ denotes the set of candidate broadcast op-exes that have not been delivered, and $F$ denotes the set of ``first'' broadcast values (message and ID) of op-exes of $C$ that are not preceded (w.r.t. $\oo_c$) by other op-exes in $C$.
Notice that this does not necessarily mean that broadcasts must be delivered in FIFO order, as \oo does not necessarily follow FIFO order (to have this property, \oo would have to follow \FIFOConsistency, see \Cref{sec:consistency}).
This is the reverse of registers, where you can only read one of the last written values according to \oo.
The \livep predicate states that, if a correct process delivers a message, then all correct processes deliver this message.
%\CG{Should we emphasize that the way we have specified the object, it holds for either crash or Byzantine processes? There is a mention of this in the beginning of Section 6.1., in a parenthesis. Perhaps it is worth stressing this more?}

\subsubsection{The Reliable Broadcast Consistency Hierarchy} \label{sec:obj-cons-examples}
The modularity of our formalism allows us to plug any consistency condition (\eg the ones defined in \Cref{sec:cons-cond}), or set of consistency conditions, that we want on any given object specification (\eg reliable broadcast)
%or the ones in \Cref{sec:comm-media-examples}) 
to yield a \textit{consistent object specification}.
This section demonstrates this fact by applying different consistency conditions on the previously defined reliable broadcast specification.

A reliable broadcast object can provide different ordering guarantees depending on which consistency conditions it is instantiated with.
\Cref{fig:bcast-hierarchy} illustrates the reliable broadcast hierarchy, and how reliable broadcasts of different strengths can be obtained by using \ProcessConsistency, \FIFOConsistency, \CausalConsistency, \Serializability or \Linearizability.

\begin{figure}[t]
    \centering
    \begin{tikzpicture}
\tikzmath{
\yOffset=1.8;
\wNode=0;%1.3;
\xFIFO=3.6;
\xCausal=7.2;
\xLin=9.9;
\xMid1=(\xCausal+\xFIFO)/2-.05;
\xMid2=\xLin-1;
}

\newcommand{\fontSize}{\tiny}

% BROADCAST CONSISTENCIES
\node[draw,align=center,minimum width=\wNode cm] (rb) at (0,0) {\fontSize Reliable\\\fontSize broadcast};

\node[draw,align=center,minimum width=\wNode cm] (fb) at (\xFIFO, 0) {\fontSize FIFO\\\fontSize broadcast};

\node[draw,align=center,minimum width=\wNode cm] (cb) at (\xCausal, 0) {\fontSize Causal\\\fontSize broadcast};

\node[draw,align=center,minimum width=\wNode cm] (tb) at (0, \yOffset) {\fontSize Total-order\\\fontSize (TO) broadcast};

\node[draw,align=center,minimum width=\wNode cm] (ftb) at (\xFIFO, \yOffset) {\fontSize FIFO TO\\\fontSize broadcast};

\node[draw,align=center,minimum width=\wNode cm] (ctb) at (\xCausal, \yOffset) {\fontSize Causal TO\\\fontSize broadcast};

\node[draw,align=center,minimum width=\wNode cm] (lb) at (\xLin, 0) {\fontSize Linearizable\\\fontSize broadcast};

% TOTAL ORDER SET
% \draw[rounded corners,ultra thick,red,dashed] (-1.1,\yOffset+.7) rectangle (\xHOffset+2.6,-\yOffset-.7) {};
% \node[red,align=center] at (0,\yOffset/2) {\fontSize\bf Total order};

% LINKS
\draw[->] (-2.85,0) -- (rb);
\node[align=center] at (-1.8,0) {\fontSize\ProcessConsistency\\\fontSize(Process order)};

\draw[->] (rb) -- (fb);
\node[align=center] at (\xFIFO/2, 0) {\fontSize\FIFOConsistency\\\fontSize(FIFO order)};
\draw[->] (fb) -- (cb);
\node[align=center] at (\xMid1, 0) {\fontSize\CausalConsistency\\\fontSize(Causal order)};

\draw[->] (tb) -- (ftb);
\node[align=center] at (\xFIFO/2+.15, \yOffset) {\fontSize\FIFOConsistency\\\fontSize(FIFO order)};
\draw[->] (ftb) -- (ctb);
\node[align=center] at (\xMid1, \yOffset) {\fontSize\CausalConsistency\\\fontSize(Causal order)};

\draw[->] (rb) -- (tb);
\node[align=center] at (-.9, \yOffset/2) {\fontSize\Serializability\\\fontSize(Total order)};
\draw[->] (fb) -- (ftb);
\node[align=center] at (\xFIFO-.9, \yOffset/2) {\fontSize\Serializability\\\fontSize(Total order)};
\draw[->] (cb) -- (ctb);
\node[align=center] at (\xCausal-.9, \yOffset/2) {\fontSize\Serializability\\\fontSize(Total order)};

\draw[->] (ctb) -- (\xLin, \yOffset) -- (lb);
\node[align=center] at (\xMid2, \yOffset) {\fontSize\Linearizability\\\fontSize(History order)};

\end{tikzpicture}
    %\vspace{-1em}
    \caption{The reliable broadcast hierarchy of~\cite{HT94,R18} extended with linearizable broadcast~\cite{CK21}.}%\vspace{-1.5em}
    \label{fig:bcast-hierarchy}
\end{figure}

% Observe that all the consistency conditions described in \Cref{def:classic-consistency} (cf. \Cref{sec:cons-cond}) guarantee that the op-exes of a given process are totally ordered, whether it is by having the $\OpProcOrder(O,\oo,\eo)$ predicate or the $\TotalOrder(O, {\oo)}$ predicate.
% This per-process total order of op-exes is often required in some object specifications.
% This is the reason behind the \ProcessConsistency condition, that is to impose a very weak ordering of the op-ex order~\oo while still ensuring that the op-exes of every process are totally ordered to satisfy the assumptions of some objects that require this ordering in their specification.

As we can see, to obtain simple reliable broadcast, we must use the \ProcessConsistency condition to guarantee that the op-exes of a given process are totally ordered.
This assumption is necessary for the invocation validity (the precondition) of the \rbroadcast operation, defined by the $\rbroadcast.\validp$ predicate.
Indeed, this predicate states that a process cannot broadcast twice with the same ID; however, if op-exes of a process are not totally ordered, then there can be two \rbroadcast op-exes from the same process and with the same ID that would not be in the context of one another, and thus the $\rbroadcast.\validp$ would not be violated when it should be.
This is why a per-process total order of op-exes (imposed by \ProcessConsistency) is often required for some object specifications (and in this case, for reliable broadcast).

% Let us also remark that, to obtain causal total-order broadcast, we combine \CausalConsistency with \Serializability (that imposes a total order of op-exes), which is equivalent to applying \SequentialConsistency (as it is defined in \Cref{sec:cons-cond}).
% As a result, causal total-order broadcast is equivalent to sequential broadcast.
% \TA{If we are short on space, we can remove the previous paragraph, which is not really interesting}

% \TA{Say that our formalism allows us to effortlessly define new versions of reliable broadcast and other objects, such as linearizable broadcast?}\CG{Adding a sentence on this could only help.}
% \CG{Clarification -- the above hierarchy holds despite if we consider crash or Byzantine faults, correct?} \TA{Yes}

% ------------ SHARED MEMORY ------------ %
\subsection{Shared Memory Object} 
\label{sec:shared-mem}

\emph{Shared memory} is a communication model where system processes communicate by reading and writing on an array of registers, identified by their address. We proceed with its formal specification. 

\subsubsection{Shared Memory Object Specification}

A shared memory $M$ provides the following operations:
\begin{itemize}
    \item $M.\rread(a)/v$: returns one of the latest values $v$ written in $M$ at address $a$,
    
    \item $M.\wwrite(v,a)$: writes value $v$ in $M$ at address $a$.
\end{itemize}

In the following, we consider a shared memory $M$, a set of op-exes $O$, a relation \oo on $O$, an op-ex $o \in O$ and its context $(O_c,\oo_c) = \ctx(o,O,\oo)$.

\paragraph{Operation \rread.} If $o \equiv M.\rread_i(a)/v$, then we have the following.
\begin{align*}
    \rread.\validp(o,(O_c,\oo_c)) &\triangleq \Exists o' \equiv M.\wwrite(-,a) \in O_c. \\
    \rread.\safep(o,(O_c,\oo_c)) &\triangleq v \in \{v' \mid \Exists o' \equiv M.\wwrite(v',a) \in O_c, \Nexists o'' \equiv M.\wwrite(-,a) \in O_c':
    \\ &\hspace{3.6em}
    o' \oo_c o''\}. \\
    \rread.\livep(o,H,\oo) &\triangleq \OpTermination(o).
\end{align*}

The \validp predicate states that a process cannot read an address never written into.
The \safep predicate states that a read must return one of the last written values at that address with respect to $\oo_c$.
%In \safep, $V$ denotes the set of values written in $M$ that were not overwritten by new writes according to the $\oo_c$ order.
The \livep predicate states that a \rread op-ex must terminate if a correct process made it.

\paragraph{Operation \wwrite.} If $o \equiv M.\wwrite_i(v,a)$, then we have the following.
\begin{align*}
    \wwrite.\livep(o,H,\oo) &\triangleq \OpTermination(o).
\end{align*}
The \livep predicate states that a \wwrite op-ex must terminate if a correct process made it.

\subsubsection{Possible Variants} 

In the above, we have defined a version of shared memory constituted of multi-writer multi-reader registers (abridged MWMR), where everyone can read and write all the registers.
But if we want to restrict the access of some registers to some processes, we can use the \validp precondition of the \rread and \wwrite operations.
For example, if we want to design a single-writer multi-reader register (abridged SWMR), we can impose in the $\wwrite.\validp$ predicate that only the invocations of \wwrite by a single process are considered valid.
More generally, we can design asymmetric objects that provide different operations to different system processes using this technique.

\subsubsection{The Shared Memory Hierarchy} 

As illustrated by \Cref{fig:mem-hierarchy}, by applying the \FIFOConsistency, \CausalConsistency, \SequentialConsistency or \Linearizability consistency conditions on the specification of shared memory, %(\Cref{sec:shared-mem}), 
different kinds of memory consistencies can be obtained.
\begin{figure}[ht]
    \centering
    \begin{tikzpicture}
\tikzmath{
\xCaus=4.4;
\xSeq=8.4;
\xAtom=12.3;
\yOffset=2;
\wNode=1.5;
}

\newcommand{\fontSize}{\small}

% MEMORY CONSISTENCIES
% \node[align=center,minimum width=\wNode cm] (phantom) at (0,\yOffset) {\phantom{\fontSize hello}\\\phantom{\fontSize world}};

\node[draw,align=center,minimum width=\wNode cm] (pram) at (0,0) {\fontSize PRAM\\\fontSize memory};

\node[draw,align=center,minimum width=\wNode cm] (caus) at (\xCaus,0) {\fontSize Causal\\\fontSize memory};

\node[draw,align=center,minimum width=\wNode cm] (seq) at (\xSeq,0) {\fontSize Sequential\\\fontSize memory};

% \node[draw,align=center,minimum width=\wNode cm] (atom) at (2*\xOffset,\yOffset) {\fontSize Atomic\\\fontSize memory};
\node[draw,align=center,minimum width=\wNode cm] (atom) at (\xAtom,0) {\fontSize Atomic\\\fontSize memory};

% TOTAL ORDER SET
% \draw[rounded corners,ultra thick,red,dashed] (2*\xOffset-3.8,\yOffset+.7) rectangle (2*\xOffset+1.2,-.7) {};
% \node[red] at (2*\xOffset-2.7,\yOffset+.3) {\fontSize\bf Total order};
% \draw[rounded corners,ultra thick,red,dashed] (2*\xOffset-3.3,.7) rectangle (3*\xOffset+1.2,-.7) {};
% \node[red] at (3*\xOffset/2,-.4) {\fontSize\bf Total order};

% LINKS
% \draw[->] (phantom) -- (pram);
% \node[align=center] at (1.5,\yOffset/2) {\fontSize\FIFOConsistency};
\draw[->] (0,-1) -- (pram);
\node[align=center] at (1.35,-.8) {\fontSize\FIFOConsistency};

\draw[->] (pram) -- (caus);
\node[align=center] at (\xCaus/2-.05,0) {\fontSize\CausalConsistency\\};

\draw[->] (caus) -- (seq);
\node[align=center] at (\xSeq/2+\xCaus/2-.05,0) {\fontSize\SequentialConsistency\\};

\draw[->] (seq) -- (atom);
% \node[align=center] at (2*\xOffset-1.3,\yOffset/2) {\fontSize\Linearizability};
\node[align=center] at (\xAtom/2+\xSeq/2,0) {\fontSize\Linearizability\\};
\end{tikzpicture}\vspace{-1em}
    \caption{The shared memory hierarchy.}
    \label{fig:mem-hierarchy}
\end{figure}
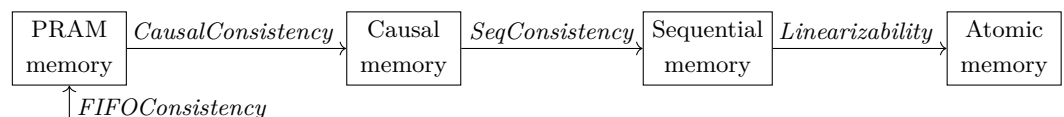

% -------- ASYNCHRONOUS MESSAGE-PASSING -----------
\subsection{Asynchronous Message-passing Object} \label{sec:async-msg-pass}

\emph{Asynchronous message-passing} is a communication model where system processes communicate by sending and receiving messages.
This model is said to be asynchronous because messages can have arbitrary delays. We proceed with its formal specification.

\subsubsection{Asynchronous Message-passing Object Specification}

A message-passing object $M$ provides the following operations:
\begin{itemize}
    \item $M.\send(m,i)$: sends message to receiver $p_i$,
    
    \item $M.\receive/(m,i)$ (notification): receives message $m$ from process $p_i$.
\end{itemize}

In the following, we consider an asynchronous message-passing object $M$, a set of op-exes $O$, a relation \oo on $O$, an op-ex $o \in O$ and its context $(O_c,\oo_c) = \ctx(o,O,\oo)$.

\paragraph{Operation \send.} If $o \equiv M.\send_i(m,j)$, then we have the following.
\begin{align*}
    \send.\livep(o,H,\oo) &\triangleq \OpTermination(o) \\
    &\land (p_j \in \corr(H), \Exists o' \equiv M.\receive_j/(m,i) \in O: o \oo o').
\end{align*}

% \TA{We must revise this: if a process sends multiple times the same message to the same receiver, then this message must be received as many times as it has been sent}
% \AF{For simplicity I would assume that all messages are different. We can relax that later, but no need to add complexity here now.}
The \livep predicate states that a \send op-ex must terminate if a correct process made it, and that the receiver, if it is correct, must eventually receive the message.
For simplicity, we assume that a given message is only sent once (so we do not have to guarantee that it is received as often as it has been sent.

%\JW{Adding $\oo$?}
\paragraph{Operation \receive.} If $o \equiv M.\receive_i/(m,j)$, then we have the following.
\begin{align*}
    \receive.\safep(o,(O_c,\oo_c)) \triangleq (m,j) \in \{&(m',k) \mid \\
    &\Exists M.\send_k(m',i) \in O_c, \Nexists M.\receive_i/(m',k) \in O_c\}.
\end{align*}
The \safep predicate states that if a process receives a message, then this message has been sent before.%\NN{as done in the shared memory above i added $M.\send()$ and $M.\receive()$ in the predictions above...for consistency...}

\subsubsection{Possible Variants} 

We considered in this specification the asynchronous message-passing model, in which messages have arbitrary delays.
But let us mention that this model's \emph{synchronous} counterpart, where messages have a maximum delay known by all processes, can also be represented in our formalism as a concurrent object.
The synchronous message-passing model can be represented as having rounds of communication, where all the messages sent in a round are received in the same round.
Hence, we see that a synchronous message-passing object $S$ can be represented as providing two operations $S.\send(m,i)$ and \jw{$S.\Endround/M$}, where \Endround is a notification delivering to the process at hand $p_i$ all the set $M$ of messages sent to $p_i$ during the round that ended.
Again, let us notice that our formalism can specify the behavior of complex distributed systems without relying on higher-order logic such as temporal logic.

Furthermore, we assumed a message-passing specification over reliable channels; that is, there is no message corruption, deletion, duplication, \etc, for instance, due to interference or disconnections.
We classify this kind of network failure under the message adversary model~\cite{SW89}.
However, we can easily imagine variants of this specification that consider a message adversary.
In particular, for message deletions, the techniques introduced in~\cite{AFRT23} can help us to design a message-adversary-prone asynchronous message-passing object.

Finally, we considered an authenticated message-passing object because, when a message is received, the recipient knows the sender's identity (there is no identity spoofing), but we can easily design an unauthenticated variant that does not provide this information.

\section{Impossibility of Resilient Consensus in Asynchronous
%Concurrent
Systems}
\label{sec:consensus}

This section further exemplifies the framework's utility by showing how it can be used to construct axiomatic proofs.
Particularly, we use our framework to define a Consensus object (Section~\ref{sec:codef}) and to provide an axiomatic proof (Sections~\ref{sec:asynconcsystem}--\ref{sec:impTheorem}) of the FLP impossibility of having reliable deterministic consensus in an asynchronous system with process failures~\cite{FLP85}. (This proof is inspired by \cite{T91}.) 
%After the proof a discussion on its relevance is included. 

Note that our proof is agnostic of the communication medium used by the processes to communicate. For completeness,
we show in Section~\ref{sec:asych-axiom-register-channel} that SWSR atomic registers and point-to-point message-passing
channels satisfy the relevant assumptions of the proof.

\subsection{Consensus Object}
\label{sec:codef}
We start by providing the specification of a Consensus object using the conditions defined in \Cref{sec:objspec}.
Our Consensus object $\Consensus$ has only one notification operation, $\Consensus.\decide_i(v)$, which returns a value $v \in V$ (we have binary consensus if $V = \{0,1\}$) to process $p_i$. Observe that we consider a simple version of a Consensus object without the common $\propose$ operation. Proving the impossibility of this version makes our proof more general\footnote{We could add a $\Consensus.\propose_i(v)$ operation to the Consensus object that returns nothing. The validity predicate of the $\decide()$ notification has to be adapted accordingly, but this does not affect the proof.}.

Let $\Histories$ be the set of histories of a distributed system that contains a Consensus object $\Consensus$.
For every history $H=(E,\eo,O, P) \in \Histories$, let $H|\Consensus=(E|C,{\eo|C},O|C, P)$ be the subhistory containing only the events of $E$ applied to $\Consensus$.
Consider history $H=(E,\eo,O, P) \in \Histories$ with set of op-exes $O$, a relation \oo on $O$, an op-ex $o \in O|\Consensus$, and its context $\ctx(o,O,\oo)=(O_c,\oo_c)$.

\paragraph{Operation \decide.} If $o \equiv \Consensus.\decide_i/v$, then we have the following predicates.
\begin{align*}
    \decide.\safep(o,(O_c,\oo_c)) &\triangleq (v \in V)
    %&
    %\land (\Nexists \Consensus.\decide_i(-) \in O_c) 
    %&
    \land (\Forall \Consensus.\decide_j/v' \in O_c: v=v'). \\
    \decide.\livep(o,H,\oo) &\triangleq \Exists \Consensus.\decide_j/- \in O.
\end{align*}
The \safep predicate states that the values decided are in the appropriate set $V$ and that, in the context of each op-ex, 
%no process decides more than once, and 
all decided values are the same.
Observe that we allow the same process to decide several times as long as the decided values are the same.
The \livep predicate states that some process must decide in every history.

The Consensus object must guarantee that exactly one value can be decided in each history.
We achieve this by combining the Consensus object specification with the $\Serializability$ consistency.
%\footnote{It is possible to combine the Consensus object specification with other consistencies, leading to other agreement objects, \eg $k$-set-agreement objects~\cite{C90}.}. 

\begin{assumption}
\label{assum:serializability}
    For every history $H=(E,\eo,O,P) \in \Histories$, we must have $\Correctness(H|\Consensus, \Serializability)$.
    % \ie For every $H=(E,\eo,O) \in \Histories$ there is a relation ${\oo|\Consensus}$ on $O|\Consensus$ such that 
    % $$\bigwedge\limits_{P \in \Serializability(H|\Consensus, {\oo|\Consensus})}P(H|\Consensus, {\oo|\Consensus}).$$
\end{assumption}

Observe that $\Serializability$ is only imposed on $H|\Consensus$, that is, we only impose a total order on the \decide op-exes.
% Recall that $H|\Consensus$ is the subhistory of $H$ containing only the events of $E$ applied to $\Consensus$ and the op-exes of $O$ applied to $\Consensus$, and the corresponding subset of \eo.
%the projection of $H$ in $\Consensus$.
From the fact that ${\oo|C}$ is a total order of the op-exes on object $\Consensus$ (imposed by $\Serializability$), and the last clause of $\decide.\safep()$, all \decide op-exes in $O$ return the same value $v$.

\begin{observation}
\label{obs:one-value}
$\Forall H=(E,\eo,O,P) \in \Histories,  (\Consensus.\decide_i/v_i,\Consensus.\decide_j/v_j \in O) \implies (v_i=v_j)$.
\end{observation}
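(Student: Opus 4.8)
The plan is to unfold \Cref{assum:serializability} and then exploit, in one stroke, the total order it guarantees on the \decide op-exes together with the last conjunct of $\decide.\safep$. Fix a history $H=(E,\eo,O,P) \in \Histories$ and two decide op-exes $o_i \equiv \Consensus.\decide_i/v_i$ and $o_j \equiv \Consensus.\decide_j/v_j$ in $O$ (the case $o_i = o_j$ being trivial). By \Cref{assum:serializability} we have $\Correctness(H|\Consensus, \Serializability)$, so by \Cref{def:byz-correct} there exist a history $H'=(E',\eo',O',P) \in \byzhist(H|\Consensus)$ and a relation \oo over $O'$ such that every clause of $\Serializability(H',\oo)$ holds; in particular $\Legality(H',\oo)$ holds (hence so does its $\Safety(H',\oo)$ clause), and $\TotalOrder(O',\oo)$ holds. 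Since all decide op-exes live on object \Consensus, and \byzhist preserves every op-ex of a non-Byzantine process while only ever adding \emph{pending} op-exes for Byzantine ones, the decide notifications $o_i, o_j$ (of non-Byzantine processes) belong to $O'$.

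Next I would invoke connectedness of the total order. Because $\TotalOrder(O',\oo)$ holds and $o_i \neq o_j$, the two op-exes are \oo-comparable; without loss of generality assume $o_i \oo o_j$. By the definition of context, $\ctx(o_j,O',\oo)=(O_c,\oo_c)$ with $O_c = \{o' \in O' \mid o' \oo o_j,\ o_j.\op.\obj = o'.\op.\obj\}$. Since $o_i \oo o_j$ and both op-exes are on the same object \Consensus, we obtain $o_i \in O_c$.

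Finally, $\Safety(H',\oo)$ forces $\decide.\safep(o_j,(O_c,\oo_c))$ to evaluate to \ttrue, and its last conjunct reads $\Forall \Consensus.\decide/v' \in O_c: v_j = v'$. Instantiating this with $o_i \in O_c$, whose decided value is $v_i$, yields $v_j = v_i$; the symmetric subcase $o_j \oo o_i$ is identical. This establishes the observation. The order-theoretic heart of the argument is thus a single application of connectedness followed by the safety clause, and I expect it to be routine. The one point genuinely requiring care — and hence the main obstacle — is the Byzantine bookkeeping of \Cref{def:byz-correct}: one must verify that the decide op-exes of interest survive into the fixed-up history $H'$ (they do precisely because \byzhist leaves non-Byzantine op-exes untouched and introduces no new notifications), and that the total order and the context are read over $O'$ rather than $O$. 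When no process is Byzantine, \Cref{def:byz-correct} collapses to \Cref{def:Correct}, $H'=H|\Consensus$, and this subtlety disappears entirely.
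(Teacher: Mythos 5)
Your proof is correct and takes essentially the same route as the paper, which justifies the observation in one sentence before stating it: \Serializability imposed on $H|\Consensus$ makes any two \decide op-exes \oo-comparable, and the last conjunct of $\decide.\safep$ applied to the later one forces $v_i=v_j$. Your extra bookkeeping around \Cref{def:byz-correct} is sound but moot in this setting, since the system of \Cref{sec:consensus} admits only crash faults, so Byzantine correctness collapses to \Cref{def:Correct} exactly as you note.
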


Observe that it is possible to have trivial implementations of a Consensus object in which all histories decide the same hardcoded value $v \in V$.
Unfortunately, this object is not very useful.
We will impose below a non-triviality condition that guarantees that there are histories in which the Consensus object decides different values.
%(\Cref{sec:resilience}).
Additionally, the set of $\Histories$ must reflect the fact that the system is asynchronous and has $n$ processes of which up to one can crash. 

\subsection{Asynchronous Distributed System}
\label{sec:asynconcsystem}

We consider an asynchronous distributed system with $n$ processes in which up to one process can crash. This means that in any history $H=(E,\eo,O,P) \in \Histories$ of the system $|P|=n$ and at most one process $p \in P$ has $p.\type=\faulty$. For convenience we assume that $P=\{p_1, p_2, \ldots, p_n\}$ in all histories.

The set of objects \Objects contains a crash-resilient Consensus object $\Consensus$. In order to be able to solve consensus, it also contains some object $M$ that allows processes to communicate. Observe that the events of this communication medium $M$ are in the system histories.

The (potentially infinite) set $\Histories$ represents the system executions.
From these histories, we will construct a (potentially infinite) set $\States$ of possible states of the system.
Each \emph{state} $\state \in \States$ is a (potentially infinite) set of events.
Intuitively, a state $\state$ is the collection of local states of all system processes $p_i$, represented by the totally-ordered local events that $p_i$ has experienced.

%\JW{Add state as a framework level concept?}
%\AF{The index attribute could be part of the framework}
To define the set $\States$, we first assign an \emph{index} to each event in a history $H=(E,\eo,O,P) \in \Histories$.
The index assigned to event $e \in E$ is an attribute $e.\idx$ that is the position of $e$ in the sequence of events of its process $e.\proc$.
This sequence is obtained by ordering with $\eo$ the set $E|e.\proc$.
Observe that the sets of events of different histories may have common events.
After adding the indices, common events with the same index are the same, but with different indices are different.
For instance, the indices distinguish events in two histories in which the same process receives the same messages from the same senders but in different orders.

A special subset of $\States$ is the set of \emph{complete states}, defined as 
    $\complete(\States) \triangleq \{E \mid (E,\eo,O,P) \in \Histories \}.$
Consider now any history $H=(E,\eo,O,P) \in \Histories$.
We say that state $\state = E \in \complete(\States) \subseteq \States$ is a \emph{state extracted from $H$.}
%Note that all the events in $\state$ of a process $p \in \Processes$ are totally ordered by $\eo$.
Then, we apply iteratively and exhaustively the following procedure to add more states to $\States$: if $\state \in \States$ is a state extracted from $H$, let $e$ be the event in $\state$ with the largest index $e.\idx$ of those from process $e.\proc$,
then $\state \setminus \{e\}$ is also a state in $\States$ extracted from $H$.
This procedure ends when the empty state $\state=\varnothing$ is reached (which is also in $\States$).
Hence, $\States$ contains a state $\state$ iff there is a history $H=(E,\eo,O,P) \in \Histories$ such that the events in $\state$ from every process $p \in \Processes$ are a prefix of the sequence of all events from $p$ in $E$ ordered by $\eo$.

Observe that this construction of the set $\States$ guarantees the following property:
$$
    \Continuity(\States) \triangleq \Forall \state \in \States \setminus \varnothing, \Exists e \in \state: (\state \setminus \{e\} \in \States).
$$

Moreover, the set of states $\States$ of the asynchronous system must satisfy the following axiom.

\begin{definition}[Asynchronous distributed system axiom] \label{def:asynch-short}
The following predicate holds for the set of states $\States$ in an asynchronous distributed system.
\begin{align*}
\Asynchrony(\States) \triangleq &\Forall \state \!\in\! \States, (\state \cup \{e\} \!\in \States \land \state \cup \{e'\} \!\in \States \land e.\proc \neq e'.\proc) 
\\&
\!\!\implies\!\! (\state \cup \{e, e'\} \!\in \States).
\end{align*}
\end{definition}

\Asynchrony requires that if two states differ only in their last respective events, which are from different processes, their union is also a state. Observe that this must hold even if the two states are extracted from different histories.
We point out that our impossibility proof is agnostic of the communication medium object $M$, as long as the medium satisfies asynchrony as defined above.

\subsection{Valence}

Our impossibility proof relies on the notion of valence, which was first introduced in~\cite{FLP85}.

\begin{definition}[Valence function \valence]
Given a state $\state \in \States$, the \emph{valence} of $\state$ is a set of values given by $\valence(\state)$ as follows.
\begin{itemize}%[leftmargin=5mm]
    \item If state $\state \in \complete(\States)$ and $\state$ is extracted from history $H=(E,\eo,O,P)$, then $\valence(\state)= \{ v \mid \Consensus.\decide()/v \in O \}$.
    %The valence of a complete state $\state \in \complete(\States)$ extracted from history $H=(E,\eo,O) \in \Histories$ contains the value $v$ returned in any $\Consensus.\decide(v) \in O$.
    \item
    $\Branching(\States)$: $\Forall \state \in \States \setminus \complete(\States), \valence(\state)=\bigcup_{\state' \in \{\state \cup \{e\} \in \States \mid e \notin \state\}} \valence(\state')$.
\end{itemize}
\end{definition}

%\TA{Maybe we should express the first point formally}
Intuitively, the valence of a complete state is the set of all values that were decided in the histories from which it was extracted, and, by \Branching, the valence of an incomplete state is the union of the valences of all its one-event extensions.
We say that a state $\state \in \States$ is \emph{univalent} iff we have $|\valence(\state)|=1$, and we say that it is \emph{multivalent} iff we have $|\valence(\state)|>1$.
Observe that it is not possible that $|\valence(\state)|=0$, due to the \livep predicate of the Consensus object.

% \begin{observation}
% $\PositiveValence(\States) \triangleq \Forall \state \in \States, |\valence(\state)| \geq 1$.
% \end{observation}

\begin{restatable}{lemma}{nonemptyvalence} \label{lem:nonemptyvalence}
$\PositiveValence(\States) \triangleq \Forall \state \in \States, |\valence(\state)| \geq 1$.
%\proofinappendix
\end{restatable}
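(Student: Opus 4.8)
The plan is to prove $\PositiveValence(\States)$ by an induction that follows the recursive shape of the \valence definition: complete states are the base case, and the \Branching clause supplies the inductive step. Everything reduces to the remark made just before the lemma, namely that the \livep predicate of \decide forces every history to contain a \decide op-ex; so every complete state has non-empty valence, and the remaining task is to propagate this non-emptiness upward to arbitrary states through \Branching.

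First I would settle the base case. Let $\state = E \in \complete(\States)$ be extracted from a history $H=(E,\eo,O,P) \in \Histories$. By \Cref{assum:serializability} we have $\Correctness(H|\Consensus,\Serializability)$, and since the clause set $\Serializability$ contains $\Legality$, which in turn contains the $\Liveness$ clause, the liveness predicate of each \decide op-ex of $O|C$ holds; this gives $\Exists \Consensus.\decide_j/- \in O$. Hence $\valence(\state)=\{v \mid \Consensus.\decide()/v \in O\}$ is non-empty (by \Cref{obs:one-value} it is in fact a singleton, but $|\valence(\state)| \geq 1$ is all we need).

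Then I would treat an incomplete state $\state \in \States \setminus \complete(\States)$. Using the characterization of \States from \Cref{sec:asynconcsystem} ($\state \in \States$ iff the events of $\state$ from each process form a prefix of that process's events in the event set $E$ of some history $H \in \Histories$), incompleteness gives $\state \subsetneq E$. Picking a process $p$ with an event in $E \setminus \state$ and letting $e$ be its earliest such event, the set $\state \cup \{e\}$ remains a per-process prefix of $E$ and hence lies in \States. This shows the \Branching index set $\{\state \cup \{e\} \in \States \mid e \notin \state\}$ is non-empty. Fixing $E$ as a target and inducting on the number of events of $E$ still missing from the state, the extension $\state \cup \{e\}$ is strictly closer to $E$, so the induction hypothesis yields $\valence(\state \cup \{e\}) \neq \varnothing$; therefore $\valence(\state) \supseteq \valence(\state \cup \{e\}) \neq \varnothing$.

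The main obstacle is the inductive step rather than the base case: one must rule out an empty \Branching union, since an empty union would force $|\valence(\state)|=0$ and defeat the claim. Guaranteeing at least one one-event extension is exactly what the per-process-prefix description of \States provides. A secondary subtlety is the well-foundedness of the induction when the target execution $E$ is infinite; this is addressed by measuring progress toward a state in which a \decide has already occurred at a finite index (equivalently, toward a finite deciding complete state), so that each appeal to \Branching strictly decreases a finite measure and the recursion bottoms out at a complete state with non-empty valence.
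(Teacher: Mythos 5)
Your proof is essentially the paper's own argument: the base case is exactly the paper's appeal to the \livep predicate of \decide for complete states, and your inductive step is the paper's — \Branching gives $\valence(\state \cup \{e\}) \subseteq \valence(\state)$, and the nonemptiness of the one-event-extension set follows from the construction of \States (which the paper invokes with a bare ``by construction'' and you justify via the per-process-prefix characterization). Your closing discussion of well-foundedness for infinite histories elaborates a point the paper's induction silently glosses over, but it is a refinement of the same induction, not a different route.
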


\begin{proof}
From the $\livep$ predicate (liveness) of the Consensus object specification, the valence $\valence(\state)$ of a complete state $\state \in \complete(\States)$ extracted from some history $H$ contains at least one value.
Let us consider now a state $\state \in \States \setminus \complete(\States)$ and assume by induction that all its one-event extension states $\state' \in \{\state \cup \{e\} \in \States \mid e \notin \state\}$ have $|\valence(\state')| \geq 1$.
By construction of the set of states $\States$, there is at least one such one-event extension state $\state'$. 
By $\Branching(\States)$, if holds that $\valence(\state') \subseteq \valence(\state)$. 
Then, $|\valence(\state)| \geq 1$.
\end{proof}

Moreover, it holds that all complete states have a finite univalent sub-state.

\begin{restatable}{lemma}{termination} \label{lem:termination}
$\Termination(\States) \triangleq \Forall \state \in \complete(\States), \Exists \state' \!\in\! \States, \state' \subseteq \state, |\state'| < +\infty: \valence(\state)=\valence(\state') \land |\valence(\state')|=1$.
%\proofinappendix
\end{restatable}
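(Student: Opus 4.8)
The plan is to exploit the fact that, by \Cref{obs:one-value} together with the \livep predicate of the Consensus object, every complete state is already univalent, and then to locate a \emph{finite} sub-state that pins down the decided value. Fix a complete state $\state = E \in \complete(\States)$, extracted from some history $H=(E,\eo,O,P)$. The \livep predicate of \decide guarantees at least one op-ex $\Consensus.\decide_j/v \in O$, so $\valence(\state) \neq \varnothing$; and by \Cref{obs:one-value} all \decide op-exes of $O$ return the same value, hence $\valence(\state)=\{v\}$ is a singleton. It then remains to produce a finite $\state' \subseteq \state$ with $\valence(\state')=\{v\}$.

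First I would isolate the response event $e_d$ of the op-ex $\Consensus.\decide_j/v$ (a notification, so $e_d$ is its only event and $e_d.\val=v$). Since $e_d.\idx$ is the finite position of $e_d$ in the event sequence of $p=e_d.\proc$, I define $\state'$ to be the prefix of that sequence up to and including $e_d$, while taking no events from the other processes, i.e.\ $\state' \triangleq \{e \in E \mid e.\proc = p \land e.\idx \leq e_d.\idx\}$. By the characterization of $\States$ (a set of events is a state iff, for every process, its events form a $\eo$-prefix of that process's events in some history), we get $\state'\in\States$: process $p$ contributes a genuine prefix and all other processes contribute the empty prefix. Moreover $\state'\subseteq\state=E$, we have $e_d\in\state'$, and $|\state'|=e_d.\idx<+\infty$, so $\state'$ satisfies all the structural requirements of the statement.

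The core step is to show $\valence(\state')=\{v\}$. I would argue that every \emph{complete} state reachable from $\state'$ by repeatedly adding single events (which is how valence propagates through \Branching) must still contain $e_d$, since this propagation only enlarges the event set. Any such complete state $\tau=E'\in\complete(\States)$ extracted from a history $H'$ therefore contains the \decide event $e_d$ with value $v$, so $v \in \valence(\tau)$, and \Cref{obs:one-value} applied to $H'$ forces $\valence(\tau)=\{v\}$. Unfolding \Branching, $\valence(\state')$ equals the union of the valences of the complete states in its extension subtree, each of which is $\{v\}$; and \Cref{lem:nonemptyvalence} ($\PositiveValence$) guarantees this union is non-empty. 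Hence $\valence(\state')=\{v\}=\valence(\state)$ with $|\valence(\state')|=1$, which concludes the argument.

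The main obstacle I anticipate is making the propagation argument for $\valence(\state')$ fully rigorous, since the extension chains from $\state'$ toward complete states may be infinite and the recursive \Branching clause does not visibly bottom out. I would address this by reading $\valence$ as the set of values collected over all complete states in the extension subtree of a given state (the intended meaning of \Branching) and by establishing the invariant ``any state containing $e_d$ has valence $\{v\}$'' through the same downward reasoning used to prove \Cref{lem:nonemptyvalence}. A secondary point to nail down is that the event $e_d$, identified by its process and index, retains its role as a \decide response carrying value $v$ in every history in which it appears; this is precisely what lets \Cref{obs:one-value} apply uniformly across the different histories that generate the complete extensions.
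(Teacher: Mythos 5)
Your proposal is correct and follows essentially the same route as the paper's proof: both take the \decide event $e_d$ of $\Consensus.\decide_i/v$ in $\state$ and form $\state'$ as the finite prefix of $p_i$'s events up to and including $e_d$ (with no events from other processes), concluding $\valence(\state')=\{v\}$ via \Cref{obs:one-value}. The only difference is that you explicitly justify the final claim $\valence(\state')=\{v\}$ — arguing through \Branching that every complete extension of $\state'$ still contains $e_d$ and hence decides $v$ — a step the paper simply asserts, so your version is a slightly more rigorous rendering of the same argument.
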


\begin{proof}
Assume $\state \in \complete(\States)$ is extracted from history $H=(E,\eo,O,P)$.
First note that $|\valence(\state)|=1$ from \Cref{obs:one-value}.
Let $\valence(\state)=\{v\}$ and $\Consensus.\decide_i/v \in O$.
Then, in $\state$ there is a \decide event $e_d$ from process $p_i$ that returns $v$.
Then, $\state'=\{e \in \state \mid (e.\proc=p_i) \land (e \eo e_d) \} \cup \{e_d\}$ is finite and has $\valence(\state')=\{v\}$.
\end{proof}

%$\Termination(\States) \!\triangleq\! \Forall \state \!\in\! \complete(\States), \Exists \state' \!\in\! \States, \state' \!\subseteq\! \state, |\state'| \!<\! +\infty\!: |\valence(\state)|\!=\!|\valence(\state')|=1$.

% \begin{observation}
% $\Termination(\States) \triangleq \Forall \state \in \complete(\States), \Exists \state' \in \States, \state' \subseteq \state, |\state'|<+\infty: |\valence(\state)|=|\valence(\state')|=1$.
% \end{observation}

%\NN{Do we need to carry $\state$ around. Can't we just use $\state$ in the observations an proofs?}
\subsection{Resilient Non-trivial Consensus}
\label{sec:resilience}

Let us now define the properties we require for a non-trivial Consensus object that is resilient to any stopping process.

\begin{definition}[Resilient Non-trivial Consensus axioms] \label{def:cons}
Given a system $\States$, the following predicates describe \emph{resilient non-trivial consensus}.
\begin{align*}
    &\NonTriviality(\States) \triangleq \Exists \state,\state' \in \States: \valence(\state) \neq \valence(\state'). \\
    &\Resilience(\States) \triangleq \Forall \state \in \States, |\valence(\state)|>1, \Forall p \in P, 
    %\\&
    \Exists \state' {=} \state \cup \{e\} \in \States: (e \!\notin\! \state) \land (p \!\neq\! e.\proc).
\end{align*}
\end{definition}

\NonTriviality states that there exist 2 states with different valences, implying that there are histories deciding different values.
\Resilience states that, for any process, any multivalent state can be extended by an event that is not from this process.
This guarantees that even if one process stops taking steps (\ie crashes), the system can still progress and eventually reach a decision.

\subsection{Impossibility Theorem}
\label{sec:impTheorem}

%We hence have the following impossibility theorem.

\begin{theorem}
\label{thm:impossible}
There cannot be a resilient non-trivial Consensus object in an asynchronous system.
\end{theorem}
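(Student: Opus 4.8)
The plan is to follow the classical bivalence argument of FLP, recast purely in terms of the axioms on $\States$. First I would record that $\Branching(\States)$ makes valence \emph{monotone under extension}: whenever $\state \cup \{e\} \in \States$ we have $\valence(\state \cup \{e\}) \subseteq \valence(\state)$, and, iterating, $\valence(\state'') \subseteq \valence(\state)$ for any $\state''$ reachable from $\state$ by adding events inside $\States$. Hence every state's valence is contained in $\valence(\varnothing)$. Since every complete state is univalent (\Cref{obs:one-value}) while $\NonTriviality(\States)$ exhibits two states of distinct valence, at least two distinct values must occur among complete states, so $|\valence(\varnothing)| \geq 2$: the empty state is \emph{multivalent}. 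This is the initial multivalent state.

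The heart of the proof is a bivalence-preservation lemma: for every multivalent $\state$ and every process $p$, there is an event $e$ with $e.\proc \neq p$, $\state \cup \{e\} \in \States$, and $\state \cup \{e\}$ multivalent. I would argue by contradiction, assuming every $p$-avoiding one-event extension of $\state$ is univalent. By $\Branching$, $\valence(\state)$ is the union of the valences of its one-event extensions, so multivalence forces two extensions $\state \cup \{e_1\}$ and $\state \cup \{e_2\}$ whose (univalent) decided values $v_1 \neq v_2$ differ. When $e_1.\proc \neq e_2.\proc$, $\Asynchrony(\States)$ gives $\state \cup \{e_1,e_2\} \in \States$; by monotonicity its valence lies in $\{v_1\} \cap \{v_2\} = \varnothing$, contradicting $\PositiveValence$.

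The main obstacle is exactly the remaining case, where the only decisive extensions are forced to be steps of a single process $q$ (the ``decisive process'' of FLP); here $\Asynchrony$ does not apply directly. I would resolve it as FLP does in abstract form: invoke $\Resilience(\States)$ to obtain from $\state$ a run that avoids $q$ and, by $\Termination$, reaches a finite univalent state $\tau$, then compare the value decided by $\tau$ with the two values obtained by inserting the $q$-steps $e_1,e_2$, commuting them past the $q$-free steps via repeated $\Asynchrony$, to again produce a reachable state whose valence is empty, contradicting $\PositiveValence$. This reconciliation of the same-process case, relying on the combined force of $\Resilience$, $\Termination$, and $\Asynchrony$, is the most delicate part of the argument.

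Finally I would assemble the contradiction. Fixing an arbitrary process $p$ and starting from the multivalent $\varnothing$, repeated application of the bivalence-preservation lemma builds an infinite chain $\varnothing = \state_0 \subset \state_1 \subset \cdots$ of multivalent states, none containing an event of $p$; this describes an admissible infinite execution in which $p$ crashes, which is tolerated since at most one crash is allowed, so it is realized by a history whose complete state $\state_\infty \in \complete(\States)$ equals $\bigcup_i \state_i$. By $\Termination$, $\state_\infty$ has a \emph{finite} univalent sub-state $\state'$; being finite, $\state' \subseteq \state_k$ for some $k$, and monotonicity gives $\valence(\state_k) \subseteq \valence(\state')$, forcing $\state_k$ to be univalent and contradicting its multivalence. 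Therefore no resilient non-trivial Consensus object can exist, proving \Cref{thm:impossible}.
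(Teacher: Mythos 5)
Your proposal follows the classical FLP schema (initial bivalent state, a bivalence-preservation lemma, then an infinite bivalent run contradicting termination), whereas the paper's proof is organized around a \emph{critical} state: a multivalent state all of whose one-event extensions are univalent, whose existence is extracted from $\Termination(\States)$, and at which a finite, local contradiction is derived with one application of \Asynchrony and one of \Resilience. This difference is not cosmetic: the critical-state device is precisely what eliminates the two steps you yourself flag as delicate, and both of those steps, as you sketch them, have genuine gaps.

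First, in your bivalence-preservation lemma the contradiction hypothesis only makes the $p$-avoiding extensions univalent; extensions by $p$ itself may still be multivalent. So in the case where all $p$-avoiding extensions of $\state$ share one value $v_1$ and the second value $v_2$ is realized only through a possibly multivalent extension $\state \cup \{e_2\}$ with $e_2.\proc = p = q$, your commutation argument does not close: pushing $e_2$ past the $q$-free deciding run to the univalent $\tau$ yields only $\valence(\tau \cup \{e_2\}) = \valence(\tau) = \{v_1\}$, which is perfectly consistent, since $v_2 \in \valence(\state \cup \{e_2\})$ gives no contradiction when that valence is not a singleton. Closing this sub-case needs the full machinery of FLP's Lemma 3 (neighboring states along the entire $q$-free reachable set, not a single pair of sibling extensions), which you have not supplied; the paper avoids it entirely because at a critical state \emph{every} one-event extension, including the one \Resilience provides from a third process, is univalent, so the same-process case collapses to the different-process case in one step. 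Second, your final assembly asserts that the increasing chain $\state_0 \subset \state_1 \subset \cdots$ is ``realized by a history whose complete state equals $\bigcup_i \state_i$.'' No axiom of the framework licenses this: $\States$ is generated from $\complete(\States)$ by deleting last events, nothing closes $\States$ or $\complete(\States)$ under unions of chains, and $\Termination$ is quantified over complete states only, so it cannot be applied to the limit unless the limit is shown complete. (A similar limit issue is brushed over in the paper's critical-state existence argument, but there it is a side condition; in your proof the limit step carries the entire contradiction, and the same unproven finiteness is also needed for your claim that the \Resilience-built $q$-free run reaches a univalent $\tau$ in finitely many steps.)
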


\begin{proof}
\sloppy
By way of contradiction, let us assume that we have a resilient non-trivial Consensus object $\Consensus$ in an asynchronous distributed system, and let $\States$ be the set of states obtained from the set $\Histories$ of the system as described above.
By construction, the properties \Continuity, \Branching, \PositiveValence, and \Termination hold for $\States$.
By assumption of an asynchronous distributed system, the axiom \Asynchrony of \Cref{def:asynch-short} holds.
We also assume that the axioms \NonTriviality and \Resilience of \Cref{def:cons} hold for $\States$, since $\Consensus$ is resilient and non-trivial. 

We first show that at least one multivalent state exists, \ie $\Exists \state \in \States: |\valence(\state)|>1$.
From \NonTriviality, we have two states $\state_u$ and $\state_{u'}$ with different valences.
From $\PositiveValence(\States)$, $\state_u$ and $\state_{u'}$ do not have empty valences, so they are either multivalent or univalent.
If either $\state_u$ or $\state_{u'}$ is multivalent, we are done, so let us assume that they are both univalent.
By \Continuity and \Branching, we can iteratively remove one event in these states, until we reach a state $\state_m$ ($\state_m$ can be the empty set) that is contained in both $\state_u$ and $\state_{u'}$ such that $\valence(\state_u) \cup \valence(\state_{u'}) \subseteq \valence(\state_m)$.
Hence, $\state_m$ is multivalent.

From the fact that there is some multivalent state $\state_m$, we can inductively show that there exists what we call a \emph{critical} state $\state_c$, \ie a multivalent state for which all extensions are univalent: 
$$
\Exists \state_c \in \States: (|\valence(\state_c)|>1) \land (\Forall \state' \in \States, \state_c \subset \state': |\valence(\state')|=1).
$$
Observe that $\state_m$ is incomplete (by $\Termination(\States)$) and hence has one-event extensions.
If all extensions are univalent, $\state_m$ satisfies the property of a critical state and we set $\state_c=\state_m$.
Otherwise, $\state_m$ has some one-event extension that is multivalent.
Then, we make $\state_m$ this new multivalent extension and repeat this procedure.
Observe that this process must eventually end by finding a critical state, since otherwise, it means an infinite multivalent state exists, which contradicts $\Termination(\States)$.

Let us remark that, given that $\state_c$ is a critical state, extending it by only one event results in a univalent state.
By \Branching, there exists (at least) two univalent states $\state_v, \state_{v'} \in \States$, with different valences and obtained extending $\state_c$ with one event: $\state_v = \state_c \cup \{e\}$ and $\state_{v'} = \state_c \cup \{e'\}$ such that $|\valence(\state_v)|=|\valence(\state_{v'})|=1$ and $\valence(\state_v) \neq \valence(\state_{v'})$.
Let us consider the two following cases.%\vspace{-.2em}
\begin{itemize}%[leftmargin=5mm]
    \item Case 1: $e.\proc \neq e'.\proc$.
    Given that the processes of the two events are distinct, from \Asynchrony, we have $\state'=\state_v \cup \state_{v'} \in \States$.
    Since $\state' = \state_v \cup \{e'\}$, from \Branching it holds that $\valence(\state') \subseteq \valence(\state_v)$, and since $|\valence(\state')| \geq 1$ (\PositiveValence), then we have $\valence(\state')=\valence(\state_v)$.
    However, a similar argument yields that $\valence(\state')=\valence(\state_{v'})$, which contradicts $\valence(\state_v) \neq \valence(\state_{v'})$.
    
    \item Case 2: $e.\proc=e'.\proc$.
    By \Resilience, we can extend $\state_c$ with one event not from $e.\proc$ to get a state $\state''=\state_c \cup \{e''\} \in \States$, such that $e''.\proc \neq e.\proc$.
    From the criticality of $\state_c$, $\state''$ is univalent.
    Then, either $\valence(\state'') \neq \valence(\state_v)$ or $\valence(\state'') \neq \valence(\state_{v'})$.
    Without loss of generality, assume that $\valence(\state'') \neq \valence(\state_v)$.
    Then, the contradiction follows from Case~1.
    \qedhere
\end{itemize}
\end{proof}

\subsection{Asynchrony of SWSR Atomic Registers and Point-to-point Message Passing}
\label{sec:asych-axiom-register-channel}

We prove that SWSR atomic registers and message passing, as communication media, satisfy the \Asynchrony axiom of \Cref{def:asynch-short}. 
We make the following natural assumption about op-ex invocations.

\begin{assumption}[Process consistent behavior]
\label{asm:ProcessConsistenBehavior}
A process decides whether to invoke an op-ex based only on its local view. Formally,
\begin{align*}
    (\state \in \States  \land \state \cup \{e\} \in \States  \land e \equiv \opex.\inv) 
    %\\&
    \!\implies\!
    (\forall \state' \in \States: \state|e.\proc = \state'|e.\proc, \state' \cup \{e\} \in \States).
\end{align*}
\end{assumption}

\subsubsection{Asynchrony of an SWSR Atomic Register}

We prove that an SWSR atomic register satisfies asynchrony as defined in \Cref{def:asynch-short}.

\begin{theorem}
\label{thm:Async-atomicSWSR}
    A linearizable Single Writer Single Reader (SWSR) atomic register $R$ satisfies the asynchronous distributed system axiom of \Cref{def:asynch-short}.
\end{theorem}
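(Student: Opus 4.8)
The plan is to unfold $\Asynchrony(\States)$: fix $\state \in \States$ and events $e, e'$ with $\state \cup \{e\}, \state \cup \{e'\} \in \States$ and $e.\proc \neq e'.\proc$, and prove $\state \cup \{e, e'\} \in \States$. Since $R$ is SWSR with writer $p_w$ and reader $p_r$, its events are write invocations/responses of $p_w$ and read invocations/responses of $p_r$; hence $e.\proc \neq e'.\proc$ forces (up to symmetry) $e$ to be a $p_w$-event and $e'$ a $p_r$-event. I would then case on whether each of $e, e'$ is an invocation or a response, yielding four combinations of which three reduce to one easy argument.

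For the three easy combinations it suffices that at least one of $e, e'$, say $e$, is an invocation. Because $e'.\proc = p_r \neq p_w = e.\proc$, adding $e'$ leaves the $e.\proc$-projection unchanged, i.e. $(\state \cup \{e'\})|e.\proc = \state|e.\proc$. Since $\state \cup \{e'\} \in \States$ and $\state \cup \{e\} \in \States$ with $e$ an invocation, \Cref{asm:ProcessConsistenBehavior} applied with $\state' = \state \cup \{e'\}$ gives $\state' \cup \{e\} = \state \cup \{e, e'\} \in \States$. The mirror argument covers the case where $e'$ is the invocation, leaving only the combination where both $e$ and $e'$ are responses.

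In this remaining combination, $e$ is the acknowledgement of a write $W$ and $e'$ the response of a read $Rd$ returning some value $v$. As a response can follow only its own invocation in a sequential process, $W.\inv$ and $Rd.\inv$ are exactly the last $p_w$- and $p_r$-events of $\state$, so at $\state$ both $W$ and $Rd$ are pending and mutually concurrent. I would pick any history $H_2$ realizing $\state \cup \{e'\}$. If the writer takes one more step in $H_2$, sequentiality forces that step to be $e$, whence $(\state|p_w)\cup\{e\}$ and $(\state|p_r)\cup\{e'\}$ are the respective process-prefixes of $H_2$ and $\state \cup \{e, e'\} \in \States$ at once. Otherwise the writer crashes right after $W.\inv$ in $H_2$; I would splice a new history $H_2'$ in which the writer instead emits the value-free acknowledgement $e$ before (optionally) crashing, which preserves the single-crash budget, and let the reader proceed as in $H_2$ through $e'$.

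The crux is to certify that completing $W$ does not spoil the read value $v$, i.e. that $H_2'$ remains linearizable with $Rd$ returning $v$; this is exactly where atomicity of $R$ is used. I would exploit that $W$'s written value is fixed by $W.\inv \in \state$ (hence identical in every history considered) and that $W$ and $Rd$ are concurrent, so a linearization $\oo$ is free to order them either way. Concretely, I keep the $\oo$-placement of $W$ relative to $Rd$ that justified $v$ in $H_2$ (recalling that the framework lets a pending write lie in a read's context) and linearize $Rd$ ahead of every write issued strictly after $\state$; then $Rd$'s context, and thus its output $v$, is unchanged, so $e'$ is a legal response. Finally, since $\Histories$ admits a continuation of this cut consistent with the retained linearization, $\state \cup \{e, e'\}$ is a process-prefix of a history in $\Histories$ and therefore a member of $\States$, establishing $\Asynchrony(\States)$. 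The main obstacle throughout is this single linearizability check; the invocation cases and the forced-response subcase are immediate.
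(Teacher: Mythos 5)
Your case split and the invocation cases are exactly the paper's: when at least one of $e, e'$ is an invocation, apply \Cref{asm:ProcessConsistenBehavior} with the state extended by the other process's event, whose projection onto the invoking process is unchanged. That part is correct and matches the paper's Case~(1). The divergence --- and the gap --- is in the both-responses case. Your splicing construction asserts that after inserting the write acknowledgement $e$ into $H_2$ you can ``keep the $\oo$-placement of $W$ relative to $Rd$'' and that ``$\Histories$ admits a continuation of this cut consistent with the retained linearization.'' This last claim is precisely what needs proof, and the retained witness can in fact fail: in $H_2$ the write $W$ is pending, so the linearization witness of $H_2$ may place $W$ \emph{after} op-exes that are invoked later in real time (e.g., a subsequent read by $p_r$ that legally returned a pre-$W$ value because the witness ordered it before the pending $W$). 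Once you insert the response $e$ into the event order, $\OpHistOrder$ forces $W \oo o'$ for \emph{every} op-ex $o'$ invoked after $e$; the retained linearization then violates $\OpHistOrder$, and reordering $W$ earlier can break the \safep predicate of those later reads, whose returned values become stale in their enlarged contexts. So the suffix of $H_2$ after $e'$ cannot simply be carried over, and your argument does not close.

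The paper avoids this surgery entirely. In its Case~(2) it considers the state $\state_r = \state \cup \{e_r\}$, which contains the invocation of the write op-ex $o=(e,e_w)$, and picks a history $H$ from which $\state_r$ is extracted \emph{in which $p_w$ is correct}. The \livep predicate of \wwrite (op-ex termination for correct processes) then forces $o$ to complete in $H$, and since a sequential writer's next event after a pending write invocation is necessarily that write's response (which carries no value, so it is the same event $e_w$), the state $\state_r \cup \{e_w\} = \state \cup \{e_w, e_r\}$ is a prefix of $H$ and hence lies in $\States$ --- contradiction, with no re-linearization needed, because $H$ is already a correct history by the definition of $\Histories$ as the set of \emph{all} correct histories. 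If you want to rescue your constructive route, the repair is to truncate rather than continue: take $H_2'$ to be a correct history ending at $\state \cup \{e, e'\}$ (all op-exes there are complete, so liveness is satisfiable, and a valid linearization exists on this finite set exactly by your concurrency observation about $W$ and $Rd$); but at that point you have essentially reproved the existence of the correct-writer history the paper simply invokes.
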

\begin{proof}
Let us consider a system that contains a SWSR register $R$ as specified in \Cref{sec:objspec} with \Linearizability consistency. Let us consider the set $\Histories$ of all the correct histories of this system projected to object $R$.
Let $\States$ be the set of states extracted from $\Histories$. Observe that the states in $\States$ only contain events from two processes: the writer $p_w$ and reader $p_r$ processes. Let us assume by way of contradiction that $R$ does not satisfy asynchrony in $\States$, then there is a $\state \in \States$ and events $e_w$ and $e_r$ from writer and reader respectively such that
\begin{equation}
    (\state \cup \{e_w\} \in \States) \land (\state \cup \{e_r\} \in \States) \land  (\state \cup \{e_w, e_r\} \notin \States).
\end{equation}

This implies that $\state$ can be extracted from a history $H_w$ from which $\state_w= \state \cup \{e_w\}$ can also be extracted, but no such history $H_w$ has event $e_r$.
Similarly, $\state$ can be extracted from a history $H_r$ from which $\state_r= \state \cup \{e_r\}$ can also be extracted, but no such history $H_r$ has event $e_w$.
We have that $e_w$ is an event from a write op-ex, and hence $e_w \equiv R.\wwrite.\inv$ or $e_w \equiv R.\wwrite.\res$. On its hand, $e_r$ is an event from a read op-ex, and $e_r \equiv R.\rread.\inv$ or $e_r \equiv R.\rread.\res$. 
We have the following possibilities:

(1) First, consider a situation in which one of the events is an invocation event (\ie $e_r \equiv R.\rread.\inv$ or $e_w \equiv R.\wwrite.\inv$). 
Let us assume, without loss of generality, that $e_w \equiv R.\wwrite.\inv$. We have that $\state|p_w=\state_r|p_w$. Then, from the process consistent behavior assumption (\Cref{asm:ProcessConsistenBehavior}) applied to $\state$, $\state_r$, and $e_w$, we have that $\state_r \cup \{e_w\}=\state \cup \{e_w, e_r\}$ belongs to $\States$, which contradicts the assumption. The case $e_r \equiv R.\rread.\inv$ is similar.

(2) Next, consider the situation where both events are responses, \ie $e_r \equiv R.\rread.\res$ and $e_w \equiv R.\wwrite.\res$.
Consider $\state_r$, which must contains the invocation $e$ of the write op-ex $o=(e,e_w)$.
Let us consider any history $H$ from which $\state_r$ can be extracted in which $p_w$ is correct.
Then by the \Legality of $H$ (and in particular the \Liveness predicate of the $\wwrite$ operation), op-ex $o$ has to terminate in $H$.
That is, $p_w$ will have $e_w$ as its next event in $H$.
Then $\state_r \cup \{e_w\}=\state \cup \{e_w, e_r\} \in \States$ which is a contradiction.
\end{proof}

\subsubsection{Asynchrony of a Point-to-point Message-passing Object}

We can also prove that a message-passing object as defined in \Cref{sec:async-msg-pass} satisfies the \Asynchrony axiom. We consider here a message passing object $M$ used by two processes, a sender $p_s$ and a receiver $p_r$, to communicate.

\begin{theorem}
    \label{thm:Async-message-passing}
    A point-to-point message-passing object $M$ satisfies the asynchronous distributed system axiom of \Cref{def:asynch-short}.
\end{theorem}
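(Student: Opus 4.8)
The plan is to mirror the structure of the proof of \Cref{thm:Async-atomicSWSR}, arguing by contradiction. I would fix a system containing a point-to-point message-passing object $M$ as specified in \Cref{sec:async-msg-pass}, used by a sender $p_s$ and a receiver $p_r$, take $\Histories$ to be the set of correct histories projected onto $M$, and let $\States$ be the set of states extracted from $\Histories$. Every such state then contains events only from $p_s$ and $p_r$. Assuming for contradiction that \Asynchrony fails, I obtain a state $\state \in \States$ together with an event $e_s$ from $p_s$ and an event $e_r$ from $p_r$ such that $\state \cup \{e_s\} \in \States$ and $\state \cup \{e_r\} \in \States$ but $\state \cup \{e_s, e_r\} \notin \States$, and the goal is to refute this.

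The observation that shapes the case analysis is that the two processes play \emph{asymmetric} roles: $p_s$ only performs \send op-exes, so $e_s \equiv M.\send.\inv$ or $e_s \equiv M.\send.\res$, whereas $p_r$ only experiences \receive notifications, so $e_r$ is always a notification response, $e_r \equiv M.\receive.\res$, and never an invocation. Hence the only invocation event that can occur among $\{e_s, e_r\}$ is the send invocation, and I would split into the same two cases as in the register proof.

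In the first case, $e_s \equiv M.\send.\inv$. Since $e_r$ belongs to $p_r \neq p_s$, adjoining it to $\state$ leaves the local view of $p_s$ unchanged, \ie $\state|p_s = (\state \cup \{e_r\})|p_s$. Applying the process consistent behavior assumption (\Cref{asm:ProcessConsistenBehavior}) to $\state$, $\state \cup \{e_r\}$, and $e_s$ yields $(\state \cup \{e_r\}) \cup \{e_s\} = \state \cup \{e_s, e_r\} \in \States$, a contradiction. In the second case both events are responses, with $e_s \equiv M.\send.\res$. Because $\state \cup \{e_s\} \in \States$, the state $\state$, and a fortiori $\state \cup \{e_r\}$, already contains the matching send invocation $e$ of the op-ex $o = (e, e_s)$. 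Choosing any history $H$ from which $\state \cup \{e_r\}$ can be extracted in which $p_s$ is correct, the \Liveness of $H$, specifically the $\OpTermination$ clause of $\send.\livep$, forces $o$ to terminate in $H$, so $e_s$ is the next event of $p_s$. This gives $(\state \cup \{e_r\}) \cup \{e_s\} = \state \cup \{e_s, e_r\} \in \States$, again contradicting the hypothesis.

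The main obstacle I anticipate is correctly handling the notification nature of the receive event. Unlike the register case, where both processes contribute matched invocation/response pairs, here $p_r$ contributes only spontaneously triggered notifications, so I must ensure the case split genuinely reflects that no invocation event ever originates from $p_r$, and that in the response case the receive notification $e_r$ may well precede the completion of the send, since delivery is asynchronous. The liveness argument must therefore remain valid when the send response is adjoined to a state in which the message has already been received; this is fine, because $\OpTermination$ only requires a correct sender's op-ex to eventually return and places no constraint on when delivery occurs relative to that return.
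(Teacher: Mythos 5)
Your proposal is correct and is essentially the proof the paper intends: the paper omits it, stating only that it is the proof of \Cref{thm:Async-atomicSWSR} with the writer replaced by the sender and the reader by the receiver, which is exactly the substitution you carry out. Your additional observation that $e_r$ is necessarily a notification response (so the only possible invocation event is $M.\send.\inv$, collapsing the case analysis accordingly) is a correct and welcome refinement of that analogy rather than a departure from it.
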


The proof is similar to the proof of \Cref{thm:Async-atomicSWSR} replacing the writer with the sender and the reader with the receiver, and is omitted.

\section{Conclusion} \label{sec:conclusion}
In this paper, we have introduced a modular framework for specifying distributed \nn{objects}.
Our approach departs from sequential specifications, and it deploys simple logic for specifying the interface between the system's components as concurrent objects.
It also separates the object's semantics from other aspects such as consistency and failures, while providing a structured precondition/postcondition style for specifying objects.

We demonstrate the usability of our framework by specifying communication media, services, and even problems, as objects.
% Within our formalism, we provide a novel, and to our belief, elegant proof of the impossibility of 
\jw{With our formalism, we also provide a proof of the impossibility of consensus}
that is agnostic of the medium used for inter-process communication.
% Furthermore, our proof does not make any assumption on the computational power of the system's processes: the processes could be stronger than Turing machines (the strongest computational model under the Church-Turing thesis), but the impossibility still holds.
% This essentially demonstrates that, as a computing model, distributed computing (which has its own impossible problems, such as the FLP theorem) is completely orthogonal from sequential computing (which has its own impossible problems, such as the halting problem).
%\TA{Talk about simple examples here}
The simple specification examples we presented in this paper were for illustration and understanding the formalism.
\jw{Of course, we acknowledge that some combinations of system model, object, and consistency may not be specified with the current version of the framework.}

We are confident that our framework's expressiveness (via the specification and combination of concurrent objects) enables the specification of more complex distributed systems, \nn{including ones with dynamic node participation}.
As our formalism gets used and flourishes with object definitions, its usefulness will be apparent both to distributed computing researchers and practitioners seeking for a modular specification of complex distributed objects.
In addition, we plan to explore how to feed our specification into proof assistants such as Coq~\cite{huet_coq_1997} and Agda~\cite{bove_brief_2009}.

\bibliographystyle{plain}
\bibliography{bibliography}

 \newpage
\appendix

\section*{\LARGE Appendix}

\section{FIFO Consistency Addendum} \label{sec:fifo-addendum}

Let us notice that the definition of the \FIFOConsistency condition, as defined in \Cref{sec:cons-cond}, differs from the traditional definition of PRAM consistency we encounter in the literature, which is: ``For each process $p_i$, we can construct a total order of op-exes containing op-exes of $p_i$, and the \textit{update} op-exes of all processes.''
Here, \textit{update} op-exes refer to the op-exes that change the object's internal state.
For instance, for a register object with \rread and \wwrite operations, the updates would be the \wwrite op-exes.
However, this initial definition is not completely accurate, because when we are constructing the total order of op-exes for a process $p_i$, if some update op-ex of another process $p_j$ returns a value, we do not want to verify the validity of this value.
Furthermore, adding a ``$\Forall p_i \in \Processes$'' quantifier at the start of the \FIFOConsistency condition would make this condition structurally different from the other conditions of \Cref{def:classic-consistency}, as it would create a potentially different \oo relation for every process of the system, instead of having a single global \oo relation like the other conditions of \Cref{def:classic-consistency}.

Hence, our definition of \FIFOConsistency relies on our new predicate $\OpFIFOOrder(H,\oo)$, which enforces a specific pattern on the \oo relation that characterizes the FIFO order of op-exes.
As a reminder, here are the definitions of \OpFIFOOrder and \FIFOConsistency given in \Cref{sec:opex-orders,sec:cons-cond}, respectively.
\begin{align*}
    \OpFIFOOrder(H,\oo)\! &\triangleq \Forall o_i,o_i' \in O|p_i, o_j,o_j' \in O|p_j: \\
    %\\
    %&\hspace{1.3em}
    &\hspace{3.2em}(o_i \oo o_i' \oo o_j \oo o_j' \land o_i \oo o_j') \implies (o_i \oo o_j \land o_i' \oo o_j'). \\
    \FIFOConsistency(H,\oo) &\triangleq \ProcessConsistency(H,\oo) \cup \{\OpFIFOOrder(H,\oo)\}.
\end{align*}

\begin{figure}[ht]
    \centering
    \begin{tikzpicture}
    
\tikzmath{
\yOffset=1.75;
\xLength=7;
\circleR=.6;
}

\node (pi) at (0,\yOffset) {$p_i$};
\node (pj) at (0,0) {$p_j$};

\node[circle,draw,minimum size=\circleR cm] (oi1) at (\xLength/2-2,\yOffset) {};
\node[circle,draw,minimum size=\circleR cm] (oi2) at (\xLength/2-.5,\yOffset) {};
\node[circle,draw,minimum size=\circleR cm] (oj1) at (\xLength/2+.5,0) {};
\node[circle,draw,minimum size=\circleR cm] (oj2) at (\xLength/2+2,0) {};

\draw[->,dashed] (pi) -- (oi1) (oi2) -- (\xLength,\yOffset);
\draw[->,dashed] (pj) -- (oj1) (oj2) -- (\xLength,0);

\node at (\xLength/2-2,\yOffset) {$o_i$};
\node at (\xLength/2-.5,\yOffset) {$o_i'$};
\node at (\xLength/2+.5,0) {$o_j$};
\node at (\xLength/2+2,0) {$o_j'$};

\draw[->] (oi1) -- (oi2);
\draw[->] (oj1) -- (oj2);

\draw[->] (oi1) -- (oj2);
\draw[->] (oi2) -- (oj1);

% \draw[-to,shorten >=-1pt] (oi1) -- (oj1);
% \draw[dotted,thick] (oi1) -- (oj1);
% \draw[->,line width=0] (oi2) -- (oj2);
% \draw[dotted,thick] (oi2) -- (oj2);

\draw[->,dotted,thick] (oi1) -- (oj1);
\draw[->,dotted,thick] (oi2) -- (oj2);
    
\end{tikzpicture}
    \caption{Illustration of \OpFIFOOrder: if the pattern represented by the 4 hard arrows is present in the \oo op-ex relation, then the 2 dotted arrows must also be present in \oo.}
    \label{fig:fifo-order}
\end{figure}

As said in \Cref{sec:opex-orders}, intuitively, \OpFIFOOrder checks that a given op-ex sees all its predecessors on the same process, plus all the predecessors of the op-exes it sees on other processes.
Furthermore, the ``knowledge'' of the op-exes of a given process is monotonically increasing with time: all the op-exes seen by a given op-ex must also be seen by its successors on the same process. 
\Cref{fig:fifo-order} illustrates the \OpFIFOOrder predicate: we consider two processes, $p_i$ and $p_j$ (that can be the same process), that both have two op-exes ($o_i$, $o_i'$, $o_j$ and $o_j'$).
If the first op-ex of $p_j$ sees the second op-ex of $p_i$, and the second op-ex of $p_j$ sees the first op-ex of $p_i$, then the first and second op-ex of $p_j$ must respectively see the first and second op-ex of $p_i$. 

In the end, we believe that, with this \OpFIFOOrder predicate, the resulting definition of \FIFOConsistency that we obtain is simpler than the original definition of PRAM consistency, while still achieving the same goal.

%\input{Paragraph/reliable-bcast}

% ------------ SHARED MEMORY ------------ %
%\input{Paragraph/shared-memory}

% -------- ASYNCHRONOUS MESSAGE-PASSING -----------
%\input{Paragraph/message-passing}

%\input{Paragraph/set-linearizability}

% \section{Examples of Communication Media Specified as Concurrent Objects} \label{sec:comm-media-examples}

% In this section, we provide two additional examples of specifications in our formalism: \emph{shared memory} and \emph{asynchronous message-passing}, two of the most fundamental communication models of distributed computing, specified as simple objects.

%\input{Paragraph/AsynchronyProof/Async-main}

% \TA{Suggestion: make a formal def of test and set in this appendix to help the presentation of the examples in the next section (or merge the 2 sections)?}

% \input{Paragraph/AsynchronyProof/Illustration-Asynch-Media}

%\input{Paragraph/Impossibility/SA-main}

\end{document}